\newtheorem{lemma}{Lemma}
\newtheorem{theorem}{Theorem}
\newtheorem{remark}{Remark}
\newtheorem{AS}{Assumption}
\begin{document}

\title{Adaptive Spatial Intercell Interference Cancellation in Multicell Wireless Networks}
\author{Jun Zhang and Jeffrey G. Andrews
\thanks{The authors are with the Wireless Networking and Communications Group (WNCG), Department of Electrical and Computer Engineering, The University of Texas at Austin. Email: jzhang06@mail.utexas.edu, jandrews@ece.utexas.edu.}}

\maketitle

\begin{abstract}
Downlink spatial intercell interference cancellation (ICIC) is considered for mitigating other-cell interference using multiple transmit antennas. A principle question we explore is whether it is better to do ICIC or simply standard single-cell beamforming. We explore this question analytically and show that beamforming is preferred for all users when the edge SNR (signal-to-noise ratio) is low ($<0$ dB), and ICIC is preferred when the edge SNR is high ($>10$ dB), for example in an urban setting. At medium SNR, a proposed adaptive strategy, where multiple base stations jointly select transmission strategies based on the user location, outperforms both while requiring a lower feedback rate than the pure ICIC approach. The employed metric is sum rate, which is normally a dubious metric for cellular systems, but surprisingly we show that even with this reward function the adaptive strategy also improves fairness. When the channel information is provided by limited feedback, the impact of the induced quantization error is also investigated. It is shown that ICIC with well-designed feedback strategies still provides significant throughput gain.
\end{abstract}

\begin{keywords}
Cellular network, other-cell interference, base station coordination, interference cancellation, limited feedback.
\end{keywords}

\section{Introduction}
The performance of contemporary multicell wireless networks is limited by other-cell interference (OCI), due to cochannel transmission in other cells. This performance degradation is especially severe for users close to the cell edge. MIMO (Multiple-Input Multiple-Output) transmission theoretically provides significant throughput gain, but the OCI is an even more complex obstacle due to the increased number of interfering sources \cite{Catreux00,Blum03,AndChoHea06}. Conventional approaches to mitigate OCI include static frequency reuse, sectoring, and spread spectrum, which require little coordination among neighboring base stations (BSs). More recently, \emph{multicell processing}, or \emph{BS coordination}, has been proposed as a more efficient way to suppress OCI through coordination among multiple BSs \cite{Shamai04}. In this paper, we propose a novel adaptive multicell interference suppression technique.

In cellular MIMO networks, due to the lack of cooperation amongst mobile users, downlink transmission is usually more difficult than the uplink, and is often the capacity-limiting link. Therefore, this paper focuses on the downlink. Multicell processing in the downlink can be categorized into two classes:

\textbf{1. Coordinated single-cell transmission}: data is transmitted from a single BS, and the OCI suppression is achieved through joint resource allocation among multiple BSs, such as joint power control and user scheduling \cite{GesKia07}. Neighboring cells share such information as the offer load in each cell, the channel state information (CSI) of edge users, but no data exchange is required.

\textbf{2. Coordinated multicell transmission}: in addition to the information shared in coordinated single-cell transmission, BSs need to exchange user data. A central unit (CU) is normally needed for joint processing of data transmission for BSs that join the coordination, so each user receives data from multiple BSs. Ideally, assuming full CSI and all the data available at the CU, coordinated multicell transmission is able to eliminate all the OCI and the system is no longer interference-limited \cite{Shamai04,SomZai07IT}.

Although coordinated multicell transmission is able to provide a considerable performance gain through efficiently exploiting the available spatial degrees of freedom, it requires a significant amount of inter-BS information exchange and is of high complexity. This would be quite challenging for practical implementation. First, the large overhead and information exchange would put onerous demands on backhaul capacity; second, precise synchronization among different
BSs is required; third, the CSI from each mobile user is required at all the coordinated BSs, which makes CSI estimation and feedback daunting.

On the other hand, coordinated single-cell transmission is of lower overhead and complexity, as no inter-BS data exchange is required, and normally each user needs to provide instantaneous or statistical CSI only to some of its neighboring BSs. In this paper, we consider a multicell network with multiple antennas at each BS. Coordinated single-cell transmission is applied in the form of intercell interference cancellation (ICIC) through zero-forcing (ZF) precoding. Canceling OCI for neighboring cells consumes available spatial degrees of freedom, so it reduces the received signal power for the home user, and is not necessarily optimal at each BS. We propose an adaptive ICIC strategy where multiple BSs jointly select transmission techniques based on user locations. Each BS only needs to exchange the location of its home user with neighboring BSs, and the CSI of users in neighboring cells is required only when ICIC is applied.

\subsection{Related Work}
Coordinated multicell transmission, also called \emph{network MIMO}, has recently drawn significant attention. In a network MIMO system, multiple coordinated BSs effectively form a ``super BS'', which transforms an interference channel into a MIMO broadcast channel, with a per-BS power constraint \cite{Boccardi06,KarYat07ISIT,YuLan07Tsp}. The optimal dirty paper coding (DPC) \cite{Costa83,WeiSte06IT} and sub-optimal linear precoders have been developed for network MIMO \cite{ShamaiVTC01,Zhang04,JafFos04,Karakayali06a,FosKar06IEE,ZhaChe09Twc}. With simplified network models, analytical results have appeared in \cite{HuaVen2004Asi,Somekh06,AktBac06IT,JinTse08EURASIP}.

In practice, the major challenges for network MIMO concern complexity and overhead. For example, the requirement for CSI grows in proportion to the number of BS antennas, the number of BSs, and the number of users. The complexity of joint processing also grows with the network size. To limit the complexity and CSI requirements, cluster-based coordination is one approach \cite{Ven07PIMRC,BocHua07PIMRC,PapGes08ICC,ZhaChe09Twc}. To reduce the complexity, distributed decoding and beamforming for network MIMO systems were proposed in \cite{AktEva08TWC,NgEva04Glob,NgEva08IT}. In \cite{GesHjo06,SkjGes06ICASSP}, BS coordination with hybrid channel knowledge was investigated, where each BS has full information of its own CSI and statistical information of other BSs' channels. Limited backhaul capacity \cite{MarFet07EW,SanSo07ISIT} and synchronization \cite{ZhaMeh08TWC,JunWir08} have also been treated to some extent. A WiMAX based implementation of network MIMO was done in \cite{VenHua09EURASIP}, for both uplink and downlink in the indoor environment.

Coordinated single-cell transmission, where the traffic data for each user comes from a single BS, is of lower complexity, requires less inter-BS information exchange, and has lower CSI requirements. Intercell scheduling has been shown to be able to expand multiuser diversity gain versus static frequency planning \cite{ChoAnd08Twc}, while coordinated load balancing and intercell scheduling were investigated in \cite{DasVis03INFOCOM,SanWan08}. Multi-cell power control algorithms were proposed in \cite{KiaGes07WCNC,KiaGes08Twc}. The use of multiple antennas to suppress OCI has also been investigated as a coordinated single-cell transmission strategy, mainly in the form of receive combining. Optimal signal combining for space diversity reception with cochannel interference in cellular networks was proposed in \cite{Win84Tvt,WinSal94Tcomm}. In \cite{JinAnd09ICC,HuaAnd08IT}, spatial interference cancellation with multiple receive antennas has been exploited in ad hoc networks, which bear some similarity to multicell networks. Receive combining, however, can be applicable mainly in the uplink, as there are usually multiple antennas at the BS but only a small number of antennas at the mobile. Downlink beamforming in multicell scenarios was investigated in \cite{FarLiu98JSAC,DahYu08CISS}, with the objective of minimizing the transmit power to support required receive SINR constraints at mobiles.

\subsection{Contributions}
In this paper, we investigate spatial ICIC using ZF precoding to suppress downlink OCI and improve the system throughput. The main contributions are summarized as follows.

\textbf{Throughput analysis and adaptive ICIC:} We provide closed-form expressions for the ergodic achievable sum rates when BSs take different transmission strategies, including selfish beamforming and doing ICIC for some of the neighboring cells. Adaptive ICIC is proposed to maximize the sum throughput by jointly selecting the transmission strategy at each BS based on user locations.

\textbf{Strategy selection:} It is shown that when the edge SNR is high, each BS tends to do ICIC for neighboring cells; when the edge SNR is low, each BS tends to do beamforming for its own user without ICIC; for medium edge SNR, the proposed adaptive strategy improves the sum and edge throughput and also reduces the required CSI compared to static ICIC. Numerical results show that in a 3-cell network the average throughput is increased by about half while the edge throughput is increased three-fold when the average edge SNR is $15$ dB. In addition, with the sum throughput as the performance metric, the BS with a cell interior user is willing to help the edge user in the neighboring cell, i.e. it encourages fairness.

\textbf{Impact of limited feedback:} If the CSI at each BS is obtained through limited feedback, the induced quantization error will degrade the performance of ICIC. We provide accurate approximations for the achievable throughput with limited feedback. It is shown that to keep a constant rate loss versus perfect CSI, the number of feedback bits to the neighboring helper BS needs to grow linearly with both the number of transmit antennas and the edge SNR (in dB). With a constraint on the total number of feedback bits, the performance can be improved by adaptively allocating the available feedback bits.

The rest of the paper is organized as follows. The system model is presented in Section \ref{Sec:Model}, together with the proposed transmission strategy. Adaptive ICIC in a 2-cell network is investigated in Section \ref{Sec:2cell}, while the extension to 3-cell and general multicell networks is in Section \ref{Sec:3cell}. The impact of limited feedback on the ICIC system is investigated in Section \ref{Sec:LFB}. Numerical results are provided in Section \ref{Sec:Num} and conclusions are made in Section \ref{Sec:Con}.

\section{System Model}\label{Sec:Model}
We consider a multicell wireless network, where each BS has $N_t$
antennas and each mobile user has a single antenna. Each mobile is associated with a \emph{home BS}, which is the closest one. Universal frequency reuse is assumed. An active mobile, i.e. the one being scheduled for transmission, receives a data signal from its home BS while suffering OCI from other BSs. ICIC in the spatial domain using multiple antennas is applied to suppress OCI. The BS applying ICIC for a user is called its \emph{helper BS}. A 2-cell network
is shown in Fig. \ref{fig:2cell}, which will be used as an instructive example in this section. We consider the downlink transmission, i.e. from the BS to mobiles. Following are some assumptions we make in our study.
\begin{AS}
\emph{The neighboring BSs can exchange the location and CSI of each active user, but may not share traffic data.}
\end{AS}

With this assumption, each BS is able to do ICIC for its neighboring cells, but coordinated multicell transmission cannot be performed.

\begin{AS}
\emph{There is one active user served in each cell at each time
slot with precoding at the BS.}
\end{AS}

Denote the user and the BS in the $i$-th cell as the $i$-th user and the $i$-th BS, i.e. only a single user is active per BS per time slot, which precludes multi-user MIMO (MU-MIMO). The results could be extended to MU-MIMO in future work. With multi-antenna transmission at each BS, it is difficult to measure the interference from neighboring cells, which depends strongly on the active precoder, so we do not consider channel-dependent scheduling in the current work.

\subsection{Adaptive Coordination}
With multiple antennas, although each BS is able to do ICIC to cancel OCI for neighboring cells, this may be suboptimal, as ICIC will reduce the received signal power for its own user. For a 2-cell network, we assume each BS can select one of two strategies:
\begin{enumerate}
\item \textbf{Selfish beamforming}: it serves its own user with eigen-beamforming and does not cancel interference for the other cell. This strategy is denoted as $BF$.
\item \textbf{Interference cancellation}: it does interference cancellation for some of the neighboring cells. Denote $IC(\mathcal{I}_i)$ as the strategy that the $i$-th BS is doing ICIC for the users with indices in the set $\mathcal{I}_i$. In a 2-cell network, $IC(\mathcal{I}_i)$ is simplified as $IC$ without ambiguity.
\end{enumerate}
So the strategy set is
$\mathcal{S}_1=\{BF,IC\}$ and $\mathcal{S}_2=\{BF,IC\}$ for BS 1 and BS 2, respectively, and the
strategy pair taken by 2 BSs is
$(s_1,s_2)\in\mathcal{S}_1\times\mathcal{S}_2$, where
$\mathcal{S}_1\times\mathcal{S}_2$ is the Cartesian product.

When the active strategy pair is $(s_1,s_2)$, the two received signals are given as
\begin{align}
y_1(s_1,s_2)&=\sqrt{P^r_{1,1}}\mathbf{h}_{1,1}^*\mathbf{f}_{1,s_1}x_1+\sqrt{P^r_{1,2}}\mathbf{h}_{1,2}^*\mathbf{f}_{2,s_2}x_2+z_1,\label{eq:1stuser}\\
y_2(s_1,s_2)&=\sqrt{P^r_{2,2}}\mathbf{h}_{2,2}^*\mathbf{f}_{2,s_2}x_2+\sqrt{P^r_{2,1}}\mathbf{h}_{2,1}^*\mathbf{f}_{1,s_1}x_1+z_2,\label{eq:2nduser}
\end{align}
where $\mathbf{a}^*$ is the conjugate transpose of a vector $\mathbf{a}$ and
\begin{itemize}
\item $P^r_{i,j}$ is the received power at the $i$-th user from the
$j$-th BS. We use the path loss model
$P^r_{i,j}=P_0\left(D_0/d_{i,j}\right)^\alpha$, where $P_0$ is the
received signal power at the reference distance $D_0$, and $d_{i,j}$
is the distance between the user in the $i$-th cell and the $j$-th
BS. In the following, we set $D_0=R$, so $P_0$ is the average SNR
at the cell edge. We assume equal transmit power at each BS, i.e. no power control is considered\footnote{Although power control can also be used to mitigate OCI and improve the system throughput \cite{GesKia07}, the emphasis in this paper is on ICIC.}.
\item $z_i$ is the complex white Gaussian noise with zero mean and unit
variance, i.e. $z_i\sim\mathcal{CN}(0,1)$. For a general multicell network, it may include interference from distant BSs.
\item $\mathbf{h}_{i,j}$ is the $N_t\times{1}$ channel vector from the $j$-th BS
to the $i$-th user. We assume uncorrelated Rayleigh fading,
so each component of $\mathbf{h}_{i,j}$ is i.i.d.
$\mathcal{CN}(0,1)$.
\item $\mathbf{f}_{i,s_i}$ is the precoding vector for the
$i$-th user when BS $i$ takes strategy $s_i$, $i=1,2$. It is normalized, i.e. $\|\mathbf{f}_{i,s_i}\|^2=1$, and its design will be discussed later in this section.
\item $x_i$ is the transmit signal for the $i$-th user, with
the power constraint $\mathbb{E}[|x_i|^2]=1$ for $i=1,2$.
\end{itemize}

The first term on the right hand side of \eqref{eq:1stuser} and \eqref{eq:2nduser} is the information signal, while the second term is the OCI. Taking user 1 as an example, the received signal-to-interference-plus-noise ratio (SINR) is
\begin{equation}\label{eq:SINR2}
\mbox{SINR}_1(s_1,s_2)=\frac{P^r_{1,1}|\mathbf{h}_{1,1}^*\mathbf{f}_{1,s_1}|^2}{1+P^r_{1,2}|\mathbf{h}_{1,2}^*\mathbf{f}_{2,s_2}|^2}.
\end{equation}
The achievable ergodic rate is
\begin{equation}\label{eq:Rate}
R_1(s_1,s_2)=\mathbb{E}\left[\log_2\left(1+\mbox{SINR}_1(s_1,s_2)\right)\right],
\end{equation}
where $\mathbb{E}[\cdot]$ is the expectation operator.

The objective of our design is to select the strategy $s_i$ for each BS
to maximize the sum throughput, i.e. to solve the following problem
\begin{equation}\label{eq:s_select}
(s^*_1,s^*_2)=\arg\max_{s_1\in\mathcal{S}_1,s_2\in\mathcal{S}_2}{
R_1(s_1,s_2)+R_2(s_1,s_2)}.
\end{equation}
This is called \emph{adaptive ICIC}, and from \eqref{eq:SINR2} and \eqref{eq:Rate} the adaptation is based on the locations of active users, which determine $P^r_{i,j}$, $i,j=1,2$. Therefore, BSs need to exchange user locations, but instantaneous CSI of a neighboring user is needed only when ICIC is applied to suppress OCI for this user. To solve the problem in \eqref{eq:s_select} we need to first calculate the achievable sum throughput for different $(s_1,s_2)$, which will be provided in Section \ref{Sec:2cell}.
\begin{remark}
Although we use the sum throughput as the performance metric, our analysis can be easily extended to maximize a weighted sum throughput. In addition, in the following analysis and simulation, we will show that somewhat atypically, maximizing the sum throughput inherently provides fairness, and the proposed adaptive coordination strategy increases both the sum throughput and the edge throughput.
\end{remark}

\subsection{Transmission Strategies}
In this subsection, we describe the precoder design for different transmission strategies.
\subsubsection{Eigen-beamforming}
In the single-cell scenario, eigen-beamforming is optimal for the MISO system with multiple transmit and a single receive antenna \cite{Telatar99}, for which the precoding vector is the channel direction, i.e. for the $i$-th user $\mathbf{f}_{i,BF}=\mathbf{h}_{i,i}/\|\mathbf{h}_{i,i}\|$.
Therefore, the signal term is distributed as $|\mathbf{f}_{i,BF}^*\mathbf{h}_{i,i}|^2\sim\chi^2_{2N_t}$, where $\chi^2_n$ denotes the chi-square random variable (RV) with $n$ degrees of freedom.

\subsubsection{ICIC through ZF precoding}
With $N_t$ antennas each BS can maximally precancel interference for up to $N_t-1$ neighboring cells with ZF precoding. Taking cell 1 as an example, to cancel its interference for users in cell 2, 3, $\cdots$, $K$, ($K\leq{N_t}$), the precoding vector $\mathbf{f}_{1,IC}$ needs to
satisfy the orthogonality condition
$\mathbf{f}_{1,IC}^*\mathbf{h}_{i,1}=0$, for $i=2,3,\cdots,K$. Meanwhile, we also want
to maximize the desired signal power
$|\mathbf{f}_{1,IC}^*\mathbf{h}_{1,1}|^2$. This corresponds to
choosing the precoding vector $\mathbf{f}_{1,IC}$ in the direction
of the projection of vector $\mathbf{h}_{1,1}$ on the nullspace of
vectors $\hat{\mathbf{H}}=[\mathbf{h}_{2,1},\mathbf{h}_{3,1},\cdots,\mathbf{h}_{K,1}]$ \cite{JinAnd09ICC}, i.e. the precoding vector is the normalized version of the following vector
\begin{equation}
\mathbf{w}_1^{(1)}=\left(\mathbf{I}-P_{\hat{\mathbf{H}}}\right)\mathbf{h}_1^{(1)},
\end{equation}
where $P_{\hat{\mathbf{H}}}$ is the projection on $\hat{\mathbf{H}}$, given as $P_{\hat{\mathbf{H}}}=\hat{\mathbf{H}}\left(\hat{\mathbf{H}}^*\hat{\mathbf{H}}\right)^{-1}\hat{\mathbf{H}}^*$. From \cite{JinAnd09ICC}, we have the distribution of the signal power as
$|\mathbf{f}_{1,IC}^*\mathbf{h}_{1,1}|^2\sim\chi^2_{2(N_t-(K-1))}$. This ICIC strategy with ZF precoding is low complex and provides closed-form analytical results. Although MMSE precoding outperforms ZF precoding at low SNR \cite{PeeHoc05Tcomm}, as we will show later that no ICIC is required when edge SNR is low. So there is negligible performance loss associated with applying ZF precoding instead of MMSE precoding.

\subsection{Signal Power and Interference Power}
As shown in \eqref{eq:SINR2} and \eqref{eq:Rate}, the achievable throughput depends on the distributions of signal and interference terms. From the precoder design, we see that the received signal term of each user is a chi-square RV, with degrees of freedom depending on the transmission strategy of its home BS. For the interference power at the $i$-th user from the $j$-th BS, for $i\neq{j}$, if $s_j=IC(\mathcal{I}_j)$ and $i\in\mathcal{I}_j$, i.e. BS $j$ does ICIC for the $i$-th user, then user $i$ does not suffer interference from BS $j$; otherwise, the $i$-th user suffers interference distributed as $|\mathbf{f}_{j,s_j}^*\mathbf{h}_{i,j}|^2\sim\chi^2_2$, which is because the design of the precoder $\mathbf{f}_{j,s_j}$ is independent of $\mathbf{h}_{i,j}$ and $|\mathbf{f}_{j,s_j}|^2=1$. Therefore, we have the following lemma on the distribution of the received signal and interference power.
\begin{lemma}\label{Lemma:Signal}
The received signal power of the $i$-th user is distributed as
\begin{equation}
|\mathbf{f}_{i,s_i}^*\mathbf{h}_{i,i}|^2\sim\left\{\begin{array}{ll}\chi^2_{2N_t}&s_i=BF\\
\chi^2_{2(N_t-m)}&s_i=IC(\mathcal{I}_i),|\mathcal{I}_i|=m,\end{array}\right.
\end{equation}
where $|\mathcal{I}|$ is the cardinality of the set $\mathcal{I}$.

The interference power of the $i$-th user from the $j$-th BS is distributed as
\begin{equation}
|\mathbf{f}_{j,s_j}^*\mathbf{h}_{i,j}|^2\left\{\begin{array}{ll}=0&s_j=IC(\mathcal{I}_j), i\in\mathcal{I}_j\\
\sim\chi^2_2&\mbox{otherwise}.\end{array}\right.
\end{equation}
\end{lemma}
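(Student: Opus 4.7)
The plan is to prove each of the two distributional claims by splitting into cases that mirror the precoder constructions of Section~II-B. No new techniques are needed beyond what was sketched inline for specific strategies; the lemma is essentially a consolidation and extension of those observations to arbitrary $|\mathcal{I}_i|=m$, leveraging rotational invariance of the isotropic complex Gaussian together with the independence of distinct users' channels (which follows from Assumption~2 and the i.i.d.\ Rayleigh fading model).

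For the signal power, the $s_i=BF$ case is immediate: substituting $\mathbf{f}_{i,BF}=\mathbf{h}_{i,i}/\|\mathbf{h}_{i,i}\|$ gives $|\mathbf{f}_{i,BF}^*\mathbf{h}_{i,i}|^2=\|\mathbf{h}_{i,i}\|^2$, which is the sum of squared magnitudes of $N_t$ i.i.d.\ $\mathcal{CN}(0,1)$ entries and hence $\chi^2_{2N_t}$. For $s_i=IC(\mathcal{I}_i)$ with $|\mathcal{I}_i|=m$, the unit-norm precoder points along $(\mathbf{I}-P_{\hat{\mathbf{H}}})\mathbf{h}_{i,i}$, where $\hat{\mathbf{H}}$ stacks the $m$ channels of the victim users in $\mathcal{I}_i$, so
\begin{equation}
|\mathbf{f}_{i,IC}^*\mathbf{h}_{i,i}|^2=\|(\mathbf{I}-P_{\hat{\mathbf{H}}})\mathbf{h}_{i,i}\|^2.
\end{equation}
Because different users' channels are independent, $\hat{\mathbf{H}}$ and $\mathbf{h}_{i,i}$ are independent; conditioning on $\hat{\mathbf{H}}$, the operator $\mathbf{I}-P_{\hat{\mathbf{H}}}$ is a deterministic orthogonal projection onto an $(N_t-m)$-dimensional subspace almost surely, and by unitary invariance of the isotropic complex Gaussian law of $\mathbf{h}_{i,i}$ the squared norm of the projection is $\chi^2_{2(N_t-m)}$. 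Since this conditional distribution does not depend on $\hat{\mathbf{H}}$, it is also the marginal.

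For the interference power, if $s_j=IC(\mathcal{I}_j)$ with $i\in\mathcal{I}_j$ then the ZF orthogonality constraint $\mathbf{f}_{j,s_j}^*\mathbf{h}_{i,j}=0$ by construction makes the interference exactly zero. In every other case, $\mathbf{f}_{j,s_j}$ is built only from channels of the form $\mathbf{h}_{k,j}$ with $k\neq i$, all independent of $\mathbf{h}_{i,j}$. Conditioning on $\mathbf{f}_{j,s_j}$, the inner product $\mathbf{f}_{j,s_j}^*\mathbf{h}_{i,j}$ is a unit-norm linear combination of the i.i.d.\ $\mathcal{CN}(0,1)$ entries of $\mathbf{h}_{i,j}$, hence itself $\mathcal{CN}(0,1)$, so its squared magnitude is $\chi^2_2$; again the conditional law does not depend on $\mathbf{f}_{j,s_j}$, so the unconditional distribution is the same. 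The only nontrivial step is the projection-norm distribution in the IC signal-power case, but that is a standard fact about isotropic Gaussians meeting an independent subspace and is already cited as \cite{JinAnd09ICC}, so no new work is required.
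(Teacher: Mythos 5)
Your proof is correct and follows essentially the same route as the paper, which establishes this lemma through the inline discussion in Sections II-B and II-C: $\|\mathbf{h}_{i,i}\|^2\sim\chi^2_{2N_t}$ for beamforming, the projection-norm distribution $\chi^2_{2(N_t-m)}$ for ZF precoding (which the paper simply cites from \cite{JinAnd09ICC}), and independence plus unit norm of the precoder for the $\chi^2_2$ interference term. The only difference is that you supply the unitary-invariance argument for the projected Gaussian explicitly rather than outsourcing it to the citation, which is a welcome but not substantively different addition.
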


\begin{remark}
From this lemma, we see that if one BS does interference cancellation for $m$ neighboring cells instead of doing selfish beamforming, the received signal power of its own user changes from a $\chi^2_{2N_t}$ RV to a $\chi^2_{2(N_t-m)}$ RV, with the number of degrees of freedom reduced by $2m$; meanwhile, for the user in the neighboring cell helped by this BS, the interference power is reduced from a $\chi_2^2$ RV to $0$. The net effect on the sum throughput, however, is not clear. This is the focus in the following sections, i.e. to characterize the achievable sum throughput when BSs take different transmission strategies.
\end{remark}

\section{Performance Analysis of a 2-cell Network}\label{Sec:2cell}
In this section, we focus on the 2-cell network depicted in Fig. \ref{fig:2cell}. We first derive the ergodic achievable throughput with different transmission strategy pairs $(s_1,s_2)$ at two BSs, which are closed-form expressions and can be used to select $(s_1,s_2)$ to maximize the sum throughput. Then we provide some insights on the transmission strategy selection.

\subsection{Auxiliary Results}
In this subsection, we provide two lemmas that will be used in the
throughput analysis.

\begin{lemma}\label{Lemma:BF}
Assuming the RV $X$ with distribution $X\sim\chi^2_{2M}$, we have
\begin{align}\label{eq:Rate_BF}
R_{BF}(\gamma,M)=\mathbb{E}_X\left[\log_2\left(1+\gamma{X}\right)\right]
=\log_2(e)e^{1/\gamma}\sum_{k=0}^{M-1}\frac{\Gamma(-k,1/\gamma)}{\gamma^k}.
\end{align}
\end{lemma}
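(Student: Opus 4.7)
The goal is to evaluate $R_{BF}(\gamma,M) = \log_2(e) \cdot I(M)$ with
$$I(M) = \int_0^\infty \ln(1+\gamma x) \, \frac{x^{M-1} e^{-x}}{(M-1)!} \, dx,$$
using the $\chi^2_{2M}$ density. My plan is to rewrite the logarithm via a Frullani-type integral, swap the order of integration using the MGF of $X$, and then collapse a geometric sum into upper incomplete gamma functions $\Gamma(-k, 1/\gamma) = \int_{1/\gamma}^\infty t^{-k-1} e^{-t} \, dt$.

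Concretely, the first step is the identity $\ln(1+y) = \int_0^\infty u^{-1}(e^{-u} - e^{-u(1+y)}) \, du$, which is easy to verify by differentiating in $y$ and checking the value at $y=0$. Applying this to $y = \gamma x$ and using Fubini to interchange the two integrals yields
$$I(M) = \int_0^\infty \frac{e^{-u}}{u} \Bigl(1 - \mathbb{E}[e^{-u\gamma X}]\Bigr) du.$$
For $X \sim \chi^2_{2M}$ the MGF is $\mathbb{E}[e^{-u\gamma X}] = (1+u\gamma)^{-M}$, so the bracket becomes $1-(1+u\gamma)^{-M}$.

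Next, the telescoping/geometric-sum identity
$$\frac{1 - (1+u\gamma)^{-M}}{u\gamma} = \sum_{k=0}^{M-1} (1+u\gamma)^{-(k+1)}$$
reduces $I(M)$ to a sum of $M$ manageable integrals,
$$I(M) = \gamma \sum_{k=0}^{M-1} \int_0^\infty \frac{e^{-u}}{(1+u\gamma)^{k+1}} \, du.$$
Each integral is then evaluated by the affine substitution $t = u + 1/\gamma$, so that $1+u\gamma = \gamma t$; this pulls out $e^{1/\gamma}/\gamma^{k+1}$ and leaves $\int_{1/\gamma}^\infty t^{-k-1} e^{-t} \, dt = \Gamma(-k, 1/\gamma)$. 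Summing and multiplying by $\log_2(e)$ yields the stated expression.

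There is no serious obstacle: the Fubini swap is legitimate because the integrand is non-negative, and the geometric sum is exact. The main risk is taking an uglier route. An alternative is integration by parts with $v = -e^{-x}\sum_{k=0}^{M-1} \frac{x^k}{k!}$ (an antiderivative of the density, chosen so the boundary terms vanish at $0$ and $\infty$), which yields $I(M) = \gamma \sum_{k=0}^{M-1}\frac{1}{k!}\int_0^\infty \frac{x^k e^{-x}}{1+\gamma x}\,dx$; one then has to expand $(t - 1/\gamma)^k$ binomially and invoke the recurrence $\Gamma(s+1,x) = s\Gamma(s,x) + x^s e^{-x}$ with negative $s$ to repackage the pieces into $\Gamma(-k, 1/\gamma)/\gamma^k$. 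I would avoid that combinatorial bookkeeping and take the Frullani route above.
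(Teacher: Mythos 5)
Your derivation is correct and complete: the Frullani representation of $\ln(1+y)$, the Tonelli swap (justified by nonnegativity of $e^{-u}-e^{-u(1+\gamma x)}$), the Gamma$(M,1)$ MGF $(1+u\gamma)^{-M}$ consistent with the paper's normalization of $\chi^2_{2M}$, the finite geometric sum, and the shift $t=u+1/\gamma$ all check out and reproduce $\log_2(e)\,e^{1/\gamma}\sum_{k=0}^{M-1}\Gamma(-k,1/\gamma)/\gamma^{k}$ exactly. The paper, however, does not prove this lemma at all; it simply cites eq.~(40) of Alouini and Goldsmith, whose derivation goes through the moment-generating-function/partial-fraction machinery for MRC capacity rather than your route. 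So your argument is a genuinely self-contained alternative: what it buys is a short, elementary proof that needs only the Frullani integral, the MGF of a Gamma variable, and the definition of the upper incomplete gamma function, with no appeal to integral tables; what the citation buys the authors is brevity and a pointer to a family of related closed forms (other diversity orders and combining schemes) in the same reference. Your remark about the alternative integration-by-parts route is also accurate --- it is the more standard path in the wireless literature and does require the binomial expansion and the recurrence $\Gamma(s+1,x)=s\Gamma(s,x)+x^{s}e^{-x}$ at negative $s$ to reassemble the answer --- so avoiding it is a reasonable choice, not a gap.
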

\begin{proof}
This result is provided as eq. (40) in \cite{AloGol99Tvt}.
\end{proof}

\begin{lemma}\label{Lemma:2cell}
Denote
\begin{equation}\notag
X\triangleq\frac{\gamma_1{Z}}{1+\gamma_2{Y}},
\end{equation}
where the RVs $Z\sim\chi^2_{2M}$, $Y\sim\chi^2_2$, and $Z$ is independent
of $Y$. Then
\begin{align}\label{eq:Rate_I2}
R_I^{(2)}(\gamma_1,\gamma_2,M)=\mathbb{E}_X\left[\log_2(1+X)\right]
=\log_2(e)\sum_{i=0}^{M-1}\sum_{l=0}^i\frac{\gamma_1^{l+1-i}}{\gamma_2(i-l)!}\cdot{I_1}\left(\frac{1}{\gamma_1},\frac{\gamma_1}{\gamma_2},i,l+1\right),
\end{align}
where $I_1$ is the integral given in \eqref{eq:I1}, with a closed-form expression given in \eqref{eq:I1expression}.
\end{lemma}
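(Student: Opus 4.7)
The plan is to condition on $Y$ and reuse Lemma~\ref{Lemma:BF}, but in a form better suited for subsequent integration. Given $Y=y$, the conditional variable $X\mid Y=y = \gamma_1 Z/(1+\gamma_2 y)$ is simply $\gamma'(y)Z$ with effective SNR $\gamma'(y)=\gamma_1/(1+\gamma_2 y)$, so that $\mathbb{E}_Z[\log_2(1+X)\mid Y=y] = R_{BF}(\gamma'(y),M)$. Direct substitution of Lemma~\ref{Lemma:BF}'s closed form would leave $\Gamma(-k,\cdot)$ functions that are awkward to integrate against $e^{-y}$, so the first step is to re-derive $R_{BF}$ in the equivalent integral form
\begin{equation*}
R_{BF}(\gamma,M) = \log_2(e)\sum_{i=0}^{M-1}\frac{1}{i!}\int_0^\infty \frac{\gamma z^i e^{-z}}{1+\gamma z}\,dz,
\end{equation*}
which follows from one integration by parts in $\mathbb{E}[\ln(1+\gamma Z)]$ combined with the tail expansion $\Pr(Z>z)=e^{-z}\sum_{i=0}^{M-1}z^i/i!$ for $Z\sim\chi^2_{2M}$.

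Plugging $\gamma=\gamma'(y)$ into the display above cancels the $(1+\gamma_2 y)$ factor against the one hidden in the denominator, and averaging over $Y$ (whose PDF is $e^{-y}$) yields the symmetric double integral
\begin{equation*}
R_I^{(2)} = \log_2(e)\sum_{i=0}^{M-1}\frac{1}{i!}\int_0^\infty\!\!\int_0^\infty \frac{\gamma_1 z^i e^{-z}e^{-y}}{1+\gamma_2 y+\gamma_1 z}\,dz\,dy.
\end{equation*}
I would then perform the inner $z$-integral in closed form by substituting $u=1+\gamma_2 y+\gamma_1 z$ and using the binomial expansion $(u-(1+\gamma_2 y))^i=\sum_{l=0}^i\binom{i}{l}u^l(-(1+\gamma_2 y))^{i-l}$; after a further rescaling $t=u/\gamma_1$, each term collapses to an upper incomplete gamma $\Gamma(l,(1+\gamma_2 y)/\gamma_1)$ (with $E_1$ covering the $l=0$ case), multiplied by $e^{(1+\gamma_2 y)/\gamma_1}$ and a power of $(1+\gamma_2 y)$. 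This is where the double sum $\sum_i\sum_{l=0}^i$ visible in \eqref{eq:Rate_I2} comes from.

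What remains, for each $(i,l)$, is a $y$-integral of the shape $\int_0^\infty (1+\gamma_2 y)^{i-l}\,\Gamma(l,(1+\gamma_2 y)/\gamma_1)\,e^{(1+\gamma_2 y)/\gamma_1-y}\,dy$. A final affine change of variable $t=(1+\gamma_2 y)/\gamma_1$ (Jacobian $\gamma_1/\gamma_2$, new lower limit $1/\gamma_1$) converts this into the integral $I_1$ defined in \eqref{eq:I1}, evaluated at $(1/\gamma_1,\gamma_1/\gamma_2,i,l+1)$; the shift $l\mapsto l+1$ is simply the indexing convention of $I_1$ (third argument equals the gamma order plus one). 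Using $\binom{i}{l}/i!=1/[l!(i-l)!]$, the remaining $1/l!$ and sign $(-1)^{i-l}$ can be absorbed into the definition of $I_1$, while the Jacobian contributes the $1/\gamma_2$; what is left is precisely the stated prefactor $\gamma_1^{l+1-i}/[\gamma_2(i-l)!]$.

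The main obstacle is not any individual step but the bookkeeping: tracking signs, factorials, and powers of $\gamma_1,\gamma_2$ through the binomial expansion and the two substitutions so that everything aligns with the normalization chosen for $I_1$. The companion evaluation \eqref{eq:I1expression} is a separate computation of $I_1$ itself; it can be obtained by inserting the finite-sum form of $\Gamma(l,t)$ (together with the series for $E_1$ when $l=0$) into the $I_1$ integrand and reducing the resulting integrals of $t^k e^{-\alpha t}$ on $[1/\gamma_1,\infty)$ to standard incomplete gammas and $E_1$ values.
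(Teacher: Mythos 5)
Your setup is fine: conditioning on $Y$, using the tail expansion of $Z$, and one integration by parts give a double integral that is indeed equal to the quantity in question (your $z$ and the SINR variable $x=\gamma_1 z/(1+\gamma_2 y)$ are related by a change of variables at fixed $y$, so your double integral and the paper's coincide). The gap is in the last step. After you carry out the inner $z$-integral and obtain, for each $(i,l)$, a term proportional to $(1+\gamma_2 y)^{i-l}e^{(1+\gamma_2 y)/\gamma_1}\Gamma\left(l,(1+\gamma_2 y)/\gamma_1\right)$, the remaining $y$-integral is \emph{not} $I_1$ under the affine substitution $t=(1+\gamma_2 y)/\gamma_1$: that substitution yields $\frac{\gamma_1^{i-l+1}}{\gamma_2}e^{1/\gamma_2}\int_{1/\gamma_1}^{\infty}t^{i-l}\,\Gamma(l,t)\,e^{t(1-\gamma_1/\gamma_2)}\,\mathrm{d}t$, whose integrand (an incomplete gamma times an exponential, on $[1/\gamma_1,\infty)$) bears no resemblance to the rational kernel $x^m e^{-ax}/[(x+b)^n(x+1)]$ of \eqref{eq:I1} on $[0,\infty)$. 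The two factors in that kernel have specific origins that your order of integration erases: $(x+1)^{-1}$ is $\frac{\mathrm{d}}{\mathrm{d}x}\ln(1+x)$ from integrating by parts \emph{in the SINR variable} $x$, and $(x+\gamma_1/\gamma_2)^{-(l+1)}$ comes from the Laplace-type $Y$-average $\int_0^\infty(\gamma_2 y)^{k}e^{-y(1+x\gamma_2/\gamma_1)}\mathrm{d}y$ performed \emph{at fixed $x$}. Saying that a leftover sign and $1/l!$ ``can be absorbed into the definition of $I_1$'' is not an available move: $I_1$ is a fixed, explicitly defined integral.

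The repair is to reverse the order of operations, which is what the paper does: keep $x$ as the outer variable, write $\mathbb{E}[\ln(1+X)]=\int_0^\infty\frac{1-F_X(x)}{1+x}\,\mathrm{d}x$ with $1-F_X(x)=\mathbb{E}_Y\bigl[e^{-x(1+\gamma_2Y)/\gamma_1}\sum_{i=0}^{M-1}\frac{(x(1+\gamma_2Y)/\gamma_1)^i}{i!}\bigr]$, expand $(1+\gamma_2Y)^i$ binomially, and do the (elementary) $Y$-average first to get $1-F_X(x)=\sum_{i=0}^{M-1}\sum_{l=0}^{i}\frac{\gamma_1^{l+1-i}}{\gamma_2(i-l)!}\cdot\frac{x^ie^{-x/\gamma_1}}{(x+\gamma_1/\gamma_2)^{l+1}}$; the remaining $x$-integral of each term against $(1+x)^{-1}$ is then literally $I_1(1/\gamma_1,\gamma_1/\gamma_2,i,l+1)$. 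Your route would also terminate in a correct closed form (everything reduces to $E_1$ and elementary functions), but not in the stated one without substantial additional work, so as written the proof does not establish \eqref{eq:Rate_I2}.
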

\begin{proof}
See Appendix \ref{Apx:Lemma2cell}.
\end{proof}

\subsection{Throughput Analysis}
Without loss of generality, we analyze the ergodic achievable throughput of user 1. In this part, we consider perfect CSI at the BS. The main result is given in the following theorem.

\begin{theorem}\label{Thm:2cell}
The ergodic achievable throughput of user 1 in a 2-cell network with given user locations and perfect CSI is given by
\begin{equation}\label{eq:2cell}
R_1(s_1,s_2)=\left\{\begin{array}{ll}
R_I^{(2)}(P^r_{1,1},P^r_{1,2},N_t) & (s_1,s_2)=(BF,BF)\\
R_{BF}(P^r_{1,1},N_t) & (s_1,s_2)=(BF,IC)\\
R_{BF}(P^r_{1,1},N_t-1) & (s_1,s_2)=(IC,IC)\\
R_I^{(2)}(P^r_{1,1},P^r_{1,2},N_t-1) & (s_1,s_2)=(IC,BF)
\end{array}\right.
\end{equation}
where $R_{BF}$ and $R_I^{(2)}$ are given in \eqref{eq:Rate_BF} and \eqref{eq:Rate_I2}, respectively.
\end{theorem}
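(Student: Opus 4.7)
The plan is to reduce the theorem to a case-by-case application of Lemma \ref{Lemma:Signal} (which characterizes the distributions of signal and interference powers) together with the two auxiliary computations in Lemma \ref{Lemma:BF} and Lemma \ref{Lemma:2cell}. The key observation is that, in the 2-cell network, $|\mathcal{I}_i|$ can only be $0$ or $1$, so the signal power for user $1$ is either $\chi^2_{2N_t}$ (if $s_1=BF$) or $\chi^2_{2(N_t-1)}$ (if $s_1=IC$), and the interference from BS $2$ is either $0$ (if $s_2=IC$) or $\chi^2_2$ (if $s_2=BF$). This naturally partitions the analysis into four cases.

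First, for the two no-interference cases $(BF,IC)$ and $(IC,IC)$, the expression in \eqref{eq:SINR2} reduces to a pure SNR of the form $P^r_{1,1}|\mathbf{f}_{1,s_1}^*\mathbf{h}_{1,1}|^2$. Plugging in $M=N_t$ or $M=N_t-1$ from Lemma \ref{Lemma:Signal} and invoking Lemma \ref{Lemma:BF} with $\gamma=P^r_{1,1}$ yields $R_{BF}(P^r_{1,1},N_t)$ and $R_{BF}(P^r_{1,1},N_t-1)$, respectively. Second, for the two interference-limited cases $(BF,BF)$ and $(IC,BF)$, the SINR has the form $\gamma_1 Z/(1+\gamma_2 Y)$ with $Z\sim\chi^2_{2M}$ and $Y\sim\chi^2_2$. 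Here I would apply Lemma \ref{Lemma:2cell} with $\gamma_1=P^r_{1,1}$, $\gamma_2=P^r_{1,2}$, and $M=N_t$ or $M=N_t-1$, which directly produces $R_I^{(2)}(P^r_{1,1},P^r_{1,2},N_t)$ and $R_I^{(2)}(P^r_{1,1},P^r_{1,2},N_t-1)$.

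The only nontrivial step that must be verified explicitly is the independence of the signal and interference random variables, which is required to legitimately invoke Lemma \ref{Lemma:2cell}. This follows because $\mathbf{h}_{1,1}$ and $\mathbf{h}_{1,2}$ are independent (channels from different BSs are independent under the Rayleigh fading model), and because the precoder $\mathbf{f}_{2,s_2}$ is constructed only from channels of the form $\mathbf{h}_{\cdot,2}$ (either $\mathbf{h}_{2,2}$ for eigen-beamforming, or $\mathbf{h}_{2,2}$ together with $\mathbf{h}_{1,2}$ for ZF ICIC), which are independent of $\mathbf{h}_{1,1}$. One must also note that even when $\mathbf{f}_{2,IC}$ depends on $\mathbf{h}_{1,2}$, that case produces zero interference at user $1$ and so the dependence issue does not arise; conversely, when $s_2=BF$, $\mathbf{f}_{2,BF}=\mathbf{h}_{2,2}/\|\mathbf{h}_{2,2}\|$ is independent of $\mathbf{h}_{1,2}$, giving the claimed $\chi^2_2$ distribution with unit-norm precoder.

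I do not anticipate a genuine obstacle: the theorem is essentially a bookkeeping statement that matches each of the four strategy pairs with the appropriate instance of the two auxiliary lemmas. The most delicate part is simply the independence argument in the preceding paragraph, which is what justifies treating $Z$ and $Y$ as independent chi-square variables in Lemma \ref{Lemma:2cell} rather than as correlated quantities through the shared precoder design.
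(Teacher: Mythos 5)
Your proposal is correct and follows essentially the same route as the paper: the four strategy pairs are matched to Lemma \ref{Lemma:BF} (when $s_2=IC$, no interference) or Lemma \ref{Lemma:2cell} (when $s_2=BF$), with the degrees-of-freedom parameter $M$ determined by $s_1$ via Lemma \ref{Lemma:Signal}. Your explicit verification that the signal and interference terms are independent (because they are built from disjoint sets of independent channel vectors) is a detail the paper leaves implicit, but it is the same argument.
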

\begin{proof}
The results are from \emph{Lemma \ref{Lemma:BF}} for $s_2=IC$ and \emph{Lemma \ref{Lemma:2cell}} for $s_2=BF$, together with \emph{Lemma \ref{Lemma:Signal}}.
\end{proof}

The results in \emph{Theorem \ref{Thm:2cell}} are closed-form expressions, from which we are able to select the strategy pair to maximize the sum throughput. However, the expressions in \eqref{eq:Rate_BF} and \eqref{eq:Rate_I2} are complicated and provide little insight. In the following, we provide a heuristic discussion on the strategy selection
for different interference-to-noise ratio (INR) scenarios.

\textbf{Both users are noise-limited:} This scenario corresponds to INR$_1\ll{1}$ and INR$_2\ll{1}$. It may happen when both users are in the cell interior, or when the edge SNR is very low. For this scenario, as noise dominates OCI, ICIC provides a marginal gain, and each BS is willing to do beamforming to increase the received signal power for its own user, i.e. the strategy pair will be $(BF,BF)$. Therefore, \emph{there is no need to do ICIC in this scenario}.

\textbf{Both users are interference-limited:} This scenario corresponds to INR$_1\gg{1}$ and INR$_2\gg{1}$. This may happen when both users are at the cell edge and the transmit power is relatively high compared to the additive noise. As users suffer a higher level of OCI in this scenario, the BS will do ICIC for the neighboring cell to increase the sum throughput, i.e. the strategy pair will be $(IC,IC)$.

\textbf{One user is noise-limited, and the other is interference-limited:} This scenario corresponds to INR$_1\ll{1}$ and INR$_2\gg{1}$. This may happen when user 1 is in the cell interior and user 2 is at the cell edge. For the interior user, it normally enjoys a high SINR, so its throughput is limited by bandwidth. This means that doing ICIC for user 2 will not hurt user 1 so much, as the received signal power reduction for user 1 only brings a throughput loss in a $\log$ scale. On the other hand, user 2 is limited by OCI, so it requires ICIC from BS 1. Meanwhile, BS 2 will do beamforming for user 2 to increase the signal power, as the throughput of user 2 is power-limited. Therefore, the strategy pair will be $(IC,BF)$.

\begin{remark}
Although this is just a heuristic discussion, it shows that different strategy pairs will be selected for different scenarios, depending on user locations and average edge SNR. The ICIC strategy is not always necessary. The third scenario is of particular interest, as it shows that even with sum throughput as the metric the BS with an interior user (high rate) is willing to help the edge user (low rate) in the neighboring cell, i.e. \emph{encouraging fairness}. Note that the strategy pair selection in the above discussion may not be the actual selection, and the actual strategy depends on user locations, the additive noise level and edge SNR, which can be determined from \eqref{eq:2cell}.
\end{remark}

In Fig. \ref{fig:simvscal_10dB}, we compare the simulation and
calculation results. Referring to Fig. \ref{fig:2cell}, user 1 is fixed at the cell edge $(-0.1R,0)$,
while user 2 is moving on the line connecting BS 1 and BS 2, with
location $(x_2R,0)$. We see that for average edge SNR $P_0=10$ dB, and for the
considered locations, the strategy pair $(IC,IC)$ is always
selected. In Fig. \ref{fig:CSIT}, we plot the selected strategy pairs for
different user locations, where user 1 and 2 are moving on the line
connecting BS 1 and BS 2. The x- and y-axis are the distance for
user 1 and user 2 from the central point $(0,0)$, respectively. The following
observations can be made:
\begin{enumerate}
\item When the edge SNR is small ($P_0=-5$ dB), $(BF,BF)$ dominates, as the throughput is limited by noise and each BS tries to increase the received signal power for its own user.
\item When the edge SNR is large ($P_0=10$ dB), $(IC,IC)$ dominates, as the throughput is limited by OCI and each BS does ICIC for neighboring cells.
\item For medium SNR ($P_0=5$ dB), the selected strategy pair depends on the user locations. Specifically, it shows that when both users are in cell interior, i.e. INRs are small, $(BF,BF)$ is selected; when both users are at cell edge, i.e. INRs are large, $(IC,IC)$ is selected; when one user is in cell interior, and the other is at cell edge, the BS with the interior user will do ICIC for the edge user.
\end{enumerate}
These observations agree with the above discussion and motivate to adaptively select transmission strategies.

\section{From 3-cell to Multicell Networks}\label{Sec:3cell}
The investigation of the simplified 2-cell network provided insights about the strategy selection and motivated the adaptive coordination, but the result cannot be readily implemented in a general multicell network. In this section, we first extend our adaptive strategy to a 3-cell network. We derive closed-form expressions for the achievable throughput for the strategy selection. Based on the results for 3-cell networks, we also propose approaches to extend the adaptive coordination to general multicell networks.

\subsection{The Strategy Set}
With 3 cells coordinating with each other, each BS has four
different strategies. Taking user 1 as an example, we describe
different strategies as follows.
\begin{enumerate}
\item \textbf{Selfish beamforming:} BS 1 does beamforming for user 1, denoted as $s_1=BF$.
\item \textbf{ICIC for 2 neighboring cells:} BS 1 does ICIC for both
cell 2 and 3, which requires $N_t\geq{3}$. This is denoted as
$s_1=IC(\{2,3\})$.
\item \textbf{ICIC for cell 2:} BS 1 does ICIC for cell 2, denoted as $s_1=IC(2)$.
\item \textbf{ICIC for cell 3:} BS 1 does ICIC for cell 3, denoted as $s_1=IC(3)$.
\end{enumerate}
To reduce the size of the strategy set, we combine strategy 3 and 4
as a single strategy, for which BS 1 does ICIC for the neighboring
cell that suffers a higher level of average OCI from BS 1, i.e. to
help the neighboring cell user that is closer to BS 1. This is a reasonable approach and reduces the complexity of the strategy selection process.

Therefore, the strategy set for user 1 is $\bar{\mathcal{S}}_1=\{BF, IC(2\mbox{ or }3), IC(\{2,3\})\}$. There are a total of $3^3=27$ different strategy combinations for 3 users, $(s_1,s_2,s_3)\in\bar{\mathcal{S}}_1\times\bar{\mathcal{S}}_2\times\bar{\mathcal{S}}_3$.

\subsection{Throughput Analysis}
First, we present the following lemma for throughput analysis.
\begin{lemma}\label{Lemma:3cell}
Denote
\begin{equation}
X\triangleq\frac{\alpha{Z}}{1+\delta_1{Y_1}+\delta_2{Y_2}},
\end{equation}
where $Z\sim\chi^2_{2M}$, $Y_1\sim\chi^2_{2}$,
$Y_2\sim\chi^2_{2}$, and they are independent. Then
\begin{align}\label{eq:Rate_3cell}
&R_I^{(3)}(\alpha,\delta_1,\delta_2,M)=\mathbb{E}_X\left[\log_2(1+X)\right]\notag\\
=&\log_2(e)\sum_{i=0}^{M-1}\sum_{l=0}^i\frac{\alpha^{l-i+1}}{(\delta_1-\delta_2)(i-l)!}\left[I_1\left(\frac{1}{\alpha},\frac{\alpha}{\delta_1},i,l+1\right)-I_1\left(\frac{1}{\alpha},\frac{\alpha}{\delta_2},i,l+1\right)\right],
\end{align}
where $I_1(\cdot,\cdot,\cdot,\cdot)$ is the integral given in \eqref{eq:I1}, with a closed-form expression given in \eqref{eq:I1expression}.
\end{lemma}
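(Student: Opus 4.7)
The plan is to reduce this three-cell expression to two applications of Lemma \ref{Lemma:2cell} by exploiting the explicit density of the combined interference $W \triangleq \delta_1 Y_1 + \delta_2 Y_2$. Since $Y_1, Y_2 \sim \chi^2_2$ are (scaled) exponential random variables, each $\delta_i Y_i$ has density $\tfrac{1}{2\delta_i} e^{-w/(2\delta_i)}$, so first I would compute the density of $W$ by direct convolution. A short partial-fraction calculation gives
\begin{equation*}
f_W(w) \;=\; \frac{1}{2(\delta_1-\delta_2)}\bigl(e^{-w/(2\delta_1)} - e^{-w/(2\delta_2)}\bigr) \;=\; \frac{\delta_1}{\delta_1-\delta_2}\cdot\frac{e^{-w/(2\delta_1)}}{2\delta_1} \;-\; \frac{\delta_2}{\delta_1-\delta_2}\cdot\frac{e^{-w/(2\delta_2)}}{2\delta_2},
\end{equation*}
i.e.\ $f_W$ is a signed linear combination of the two marginal exponential densities.

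Next I would use this mixture representation to split the expectation. Define $g(w) \triangleq \mathbb{E}_Z[\log_2(1+\alpha Z/(1+w))]$, so that by independence of $Z$ from $(Y_1,Y_2)$,
\begin{equation*}
R_I^{(3)}(\alpha,\delta_1,\delta_2,M) \;=\; \int_0^\infty g(w)\,f_W(w)\,dw.
\end{equation*}
Observing that $R_I^{(2)}(\alpha,\delta,M)$ is, by its own definition, exactly $\int_0^\infty g(w) \cdot \tfrac{1}{2\delta}e^{-w/(2\delta)}\,dw$, the mixture structure of $f_W$ immediately yields the reduction
\begin{equation*}
R_I^{(3)}(\alpha,\delta_1,\delta_2,M) \;=\; \frac{\delta_1}{\delta_1-\delta_2}\,R_I^{(2)}(\alpha,\delta_1,M) \;-\; \frac{\delta_2}{\delta_1-\delta_2}\,R_I^{(2)}(\alpha,\delta_2,M).
\end{equation*}

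Finally I would substitute the closed form from Lemma \ref{Lemma:2cell} into each term. The prefactor $\delta_i/(\delta_1-\delta_2)$ cancels the $1/\delta_i$ inside the Lemma 2 summand and leaves the common factor $\alpha^{l+1-i}/[(\delta_1-\delta_2)(i-l)!]$; combining the two resulting double sums over the same $(i,l)$ index set produces exactly the bracketed difference $I_1(1/\alpha,\alpha/\delta_1,i,l+1)-I_1(1/\alpha,\alpha/\delta_2,i,l+1)$ claimed in \eqref{eq:Rate_3cell}.

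The only real subtlety is the degenerate case $\delta_1=\delta_2$, where both the hypoexponential density formula and the stated expression are singular; this is handled by continuity, taking the limit $\delta_2\to\delta_1$ (the difference quotient becomes a derivative of $I_1$ in its second argument, matching the Erlang density one would obtain from a direct convolution). Apart from this removable singularity, the argument is routine: the main step is recognizing that partial fractions turn the two-interferer density into a signed mixture of one-interferer densities, so that the new integral is linear over the already-evaluated two-cell integrals.
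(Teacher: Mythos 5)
Your proof is correct, but it takes a genuinely different route from the paper. The paper proves Lemma \ref{Lemma:3cell} the same way it proves Lemma \ref{Lemma:2cell}: derive the cdf of $X$ directly (now averaging the incomplete-gamma tail over two independent exponential interference terms), then apply integration by parts via $\mathbb{E}[\ln(1+X)]=\int_0^\infty\frac{1-F_X(x)}{1+x}\,\mathrm{d}x$ and reduce the remaining integrals to $I_1$. You instead observe that the hypoexponential density of $W=\delta_1 Y_1+\delta_2 Y_2$ is a signed mixture of the two single-interferer exponential densities with weights $\frac{\delta_1}{\delta_1-\delta_2}$ and $-\frac{\delta_2}{\delta_1-\delta_2}$, which by linearity gives $R_I^{(3)}(\alpha,\delta_1,\delta_2,M)=\frac{\delta_1}{\delta_1-\delta_2}R_I^{(2)}(\alpha,\delta_1,M)-\frac{\delta_2}{\delta_1-\delta_2}R_I^{(2)}(\alpha,\delta_2,M)$, and the substitution of \eqref{eq:Rate_I2} does produce exactly \eqref{eq:Rate_3cell}. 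Your approach is more modular (it reuses Lemma \ref{Lemma:2cell} as a black box, makes the $(\delta_1-\delta_2)$ denominator and the difference of $I_1$ terms structurally transparent, and extends immediately to $K$ distinct interferers via a $K$-term partial fraction), and you also flag the removable singularity at $\delta_1=\delta_2$, which the paper ignores; the paper's direct route buys nothing extra here beyond self-containment. One cosmetic caveat: you wrote the $\chi^2_2$ density with the real-statistics normalization $\frac{1}{2}e^{-w/2}$, whereas the paper's convention (consistent with $|h|^2$ being unit-mean exponential for $\mathcal{CN}(0,1)$ entries and with Lemma \ref{Lemma:BF}) is the unit-mean exponential $e^{-w}$; this does not affect your argument because the mixture weights, and hence the reduction identity, are identical under either convention, but you should state the convention you are matching to Lemma \ref{Lemma:2cell}.
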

\begin{proof}
The proof is similar to the one in Appendix \ref{Apx:Lemma2cell} for \emph{Lemma \ref{Lemma:2cell}}.
\end{proof}

Taking the first user as an example, its received SINR with the strategy $\mathbf{s}=(s_1,s_2,s_3)$ is
\begin{equation}\label{eq:SINR_TTT}
\mbox{SINR}_1(\mathbf{s})=\frac{P^r_{1,1}|\mathbf{h}_{1,1}^*\mathbf{f}_{1,s_1}|^2}{1+P^r_{1,2}|\mathbf{h}_{1,2}^*\mathbf{f}_{2,s_2}|^2+P^r_{1,3}|\mathbf{h}_{1,3}^*\mathbf{f}_{3,s_3}|^2}.
\end{equation}
The achievable rate is
\begin{equation}\label{R_3cell}
R_1(\mathbf{s})=\mathbb{E}\left[\log_2\left(1+\mbox{SINR}_1(s_1,s_2,s_3)\right)\right],
\end{equation}
for which a closed-form expression is given in the following theorem.
\begin{theorem}\label{Thm:3cell}
The ergodic achievable throughput of user 1 in a 3-cell network with given user locations and perfect CSI is given by
\begin{equation}\label{eq:3cell}
R_1(\mathbf{s})=\left\{\begin{array}{ll}
R_I^{(3)}(P^r_{1,1},P^r_{1,2},P^r_{1,3},M) & s_2=BF, s_3=BF\\
R_I^{(2)}(P^r_{1,1},P^r_{1,j},M) & s_j=IC(\mathcal{I}_j),1\in\mathcal{I}_j, j=2\mbox{ or }3\\
R_{BF}(P^r_{1,1},M) & s_j=IC(\mathcal{I}_j),1\in\mathcal{I}_j, j=2,3
\end{array}\right.
\end{equation}
where $R_{BF}$, $R_I^{(2)}$, and $R_I^{(3)}$ are given in \eqref{eq:Rate_BF}, \eqref{eq:Rate_I2}, and \eqref{eq:Rate_3cell} respectively. The parameter $M$ depends on the distribution of the signal term, which subsequently depends on $s_1$:
\begin{equation}\label{eq:M}
M=\left\{\begin{array}{ll}N_t & s_1=BF \\N_t-m & s_1=IC(\mathcal{I}_1),|\mathcal{I}_1|=m.\end{array}\right.
\end{equation}
\end{theorem}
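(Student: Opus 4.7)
The plan is to reduce each of the three cases in the theorem to a direct application of one of the previously established lemmas, using Lemma~\ref{Lemma:Signal} to identify the chi-square distributions of the signal and interference terms in~\eqref{eq:SINR_TTT} and then matching the resulting SINR expression to the generic random-variable forms in Lemmas~\ref{Lemma:BF}, \ref{Lemma:2cell}, and~\ref{Lemma:3cell}. The structure mirrors Theorem~\ref{Thm:2cell}: once the distributional picture is pinned down, the ergodic rate in each case follows by substitution.

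First, fixing $s_1$ and applying Lemma~\ref{Lemma:Signal}, the signal term $|\mathbf{h}_{1,1}^{*}\mathbf{f}_{1,s_1}|^{2}$ is $\chi^{2}_{2M}$-distributed, with $M$ as defined in~\eqref{eq:M}. For each $j\in\{2,3\}$, the same lemma shows that $|\mathbf{h}_{1,j}^{*}\mathbf{f}_{j,s_j}|^{2}$ equals zero when $s_j=IC(\mathcal{I}_j)$ with $1\in\mathcal{I}_j$, and is an independent $\chi^{2}_{2}$ random variable otherwise. Independence of each nonzero interference term from the signal term, and of the two interference terms from one another, follows because $\mathbf{f}_{j,s_j}$ is built from channels originating at BS $j$ (which are independent of $\mathbf{h}_{1,j}$ whenever $1\notin\mathcal{I}_j$) and the channel vectors from distinct BSs are mutually independent under the Rayleigh fading assumption.

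With these distributional facts in hand, the three cases of the theorem become immediate. When neither BS 2 nor BS 3 cancels for user 1, the SINR in~\eqref{eq:SINR_TTT} takes exactly the form of the variable $X$ in Lemma~\ref{Lemma:3cell} with $\alpha=P^{r}_{1,1}$, $\delta_1=P^{r}_{1,2}$, $\delta_2=P^{r}_{1,3}$, so Lemma~\ref{Lemma:3cell} yields $R_I^{(3)}(P^{r}_{1,1},P^{r}_{1,2},P^{r}_{1,3},M)$. When exactly one neighbouring BS, say BS $j$, satisfies $1\in\mathcal{I}_j$, the corresponding interference term vanishes and the SINR matches Lemma~\ref{Lemma:2cell} with $\gamma_1=P^{r}_{1,1}$ and $\gamma_2=P^{r}_{1,k}$ for $k\in\{2,3\}\setminus\{j\}$, giving $R_I^{(2)}(P^{r}_{1,1},P^{r}_{1,k},M)$. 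Finally, when both BS 2 and BS 3 cancel for user 1, both interference terms drop out and the SINR reduces to $P^{r}_{1,1}Z$ with $Z\sim\chi^{2}_{2M}$, so Lemma~\ref{Lemma:BF} produces $R_{BF}(P^{r}_{1,1},M)$.

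The main obstacle is not algebraic but a bookkeeping step: verifying in each case that the nonzero interference terms are genuinely $\chi^{2}_{2}$ and jointly independent of the signal term, despite the fact that the ZF precoder $\mathbf{f}_{1,s_1}$ is itself constructed from the channels $\mathbf{h}_{k,1}$ for $k\in\mathcal{I}_1$ and that $\mathbf{f}_{j,s_j}$ depends on $\mathbf{h}_{k,j}$ for $k\in\mathcal{I}_j$. Once this independence is confirmed, by the same argument used for the 2-cell case, the theorem reduces to three substitutions into the corresponding auxiliary lemma, with all nontrivial computation already absorbed into Lemmas~\ref{Lemma:BF}, \ref{Lemma:2cell}, and~\ref{Lemma:3cell}.
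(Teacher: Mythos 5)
Your proposal is correct and follows essentially the same route as the paper, which likewise derives the result by combining Lemma~\ref{Lemma:Signal} (for the chi-square distributions of the signal and interference terms) with the auxiliary rate Lemmas~\ref{Lemma:BF}, \ref{Lemma:2cell}, and~\ref{Lemma:3cell}; your added care about the mutual independence of the precoders and the cross-channels is a welcome elaboration of a step the paper leaves implicit. Your reading of the middle case, with the surviving interferer's power $P^{r}_{1,k}$ for $k\in\{2,3\}\setminus\{j\}$ as the second argument of $R_I^{(2)}$, is the physically correct one and in fact fixes a small indexing slip in the theorem statement itself.
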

\begin{proof}
The results come from \emph{Lemma \ref{Lemma:Signal}} and \emph{Lemma \ref{Lemma:3cell}}.
\end{proof}

Based on this theorem, we are able to select the transmission strategy at each BS to maximize the sum throughput. Note that the strategy selection is in a coordinated way, i.e. the 3 BSs jointly determine the set $(s_1,s_2,s_3)$, as the objective function is common for all the BSs. It explicitly assumes that each BS knows the strategy taken by other BSs.

\subsection{Extension to Multicell Networks}\label{Sec:Distri}
In this subsection, we propose approaches to extend our results to a general multicell setting. A detailed investigation is beyond the scope of this paper but is feasible in principle.

One approach is to apply the proposed adaptive ICIC strategy with cell sectoring, as in \cite{Zhang04}. By using 120-degree sectoring in each cell, every 3 neighboring cells can coordinate with each other to serve users in the shadow area shown in Fig. \ref{fig:3cell}, where the 3 BSs jointly select the transmission strategy based on the results in \emph{Theorem \ref{Thm:3cell}}.

It is also possible to implement the adaptive ICIC strategy in a distributed way. In this approach, each BS determines its transmission strategy independently rather than in a coordinated way. The main idea is for each BS to select its transmission strategy by itself. To do this, each BS needs to estimate if there is a sum throughput gain by providing ICIC for its neighboring cells. If the sum throughput is increased, it will select ICIC as its strategy; otherwise, it will perform beamforming for its own user. As each user is located in the interior area of a certain 3-cell sub-network, as in the shadow area in Fig. \ref{fig:3cell}, its achievable throughput can be estimated based on \eqref{eq:3cell} by approximating the interference from outer cells as white Gaussian noise. No cluster structure is used, so this approach can be adopted in a network of an arbitrary size.

\section{Impact of Limited Feedback}\label{Sec:LFB}
We have assumed perfect CSI at the BS in the results provided thus far. However, in realistic scenarios, there will always be inaccuracy in the available CSI. In this section, we consider a FDD (Frequency Division Duplex) system where CSI is obtained through limited feedback \cite{LovHea08JSAC}. As limited feedback sends quantized channel information to the transmitter, it introduces quantization error to the available CSI. We will analyze the impact of limited feedback, and consider feedback design for adaptive ICIC transmission.

\subsection{Limited Feedback}
With limited feedback, the channel direction information (CDI) is fed back using a quantization
codebook known at both the transmitter and receiver. The
quantization is chosen from a codebook of unit norm vectors of size
$L=2^B$, where $B$ is the number of feedback bits. Denote the codebook as
$\mathcal{C}=\{\mathbf{c}_{1},\mathbf{c}_{2},\cdots,\mathbf{c}_{L}\}$.
Each user quantizes its channel direction to the closest codeword,
measured by the inner product. Therefore, the quantized channel
direction is
\begin{equation}
\hat{\mathbf{h}}_{i,j}=\arg\max_{\mathbf{c}\in\mathcal{C}}|\tilde{\mathbf{h}}_{i,j}^*\mathbf{c}|,
\end{equation}
where
$\tilde{\mathbf{h}}_{i,j}=\frac{\mathbf{h}_{i,j}}{\|\mathbf{h}_{i,j}\|}$
is the actual channel direction. Then each user feeds back $B$ bits to indicate the index of this codeword in the codebook $\mathcal{C}$. We assume the channel estimation at each user is perfect, and the feedback channel is error-free and without delay. Random vector quantization (RVQ) \cite{SanHon04ISIT,Jin06IT} is used to facilitate the analysis, where each quantization vector is independently chosen from the isotropic distribution on the $N_t$-dimensional unit sphere.

If ICIC is performed for the $i$-th user by some of its neighboring BSs, this user needs to estimate channel directions from multiple BSs, which are then independently quantized and fed back to its home BS. Then the home BS can forward the associated CDI to neighboring BSs through backhaul connection.
\begin{AS}
The $i$-th user uses the codebook $\mathcal{C}_{i,j}$ to quantize CDI for the $j$-th BS, which is of size $L_{i,j}=2^{B_{i,j}}$. If $L_{i,j}$ is the same for different $j$, user $i$ can use the same quantization codebook, but the codebooks are different from user to user.
\end{AS}

As will be shown later, the quantization for channel directions of different BSs have different impacts on the system performance, so different $L_{i,j}$ for different $i$ and $j$ may provide better performance. Different users employing different codebooks is to avoid the same quantized CDI from multiple users at the same BS.

\subsection{Throughput Analysis}
First, we consider the statistics of the quantized CDI. Let
$\cos\theta_{i,j}=|\tilde{\mathbf{h}}_{i,j}^*\hat{\mathbf{h}}_{i,j}|$, where $\theta_{i,j}=\angle\left(\tilde{\mathbf{h}}_{i,j},\hat{\mathbf{h}}_{i,j}\right)$,
then we have \cite{YueLov07Twc}
\begin{equation}\label{eq:xi}
\xi_{i,j}=\mathbb{E}_{\theta_{i,j}}\left[\cos^2\theta_{i,j}\right]=1-L_{i,j}\cdot\beta\left(L_{i,j},\frac{N_t}{N_t-1}\right),
\end{equation}
where $\beta(x,y)$ is the Beta function, i.e.
$\beta(x,y)=\frac{\Gamma(x)\Gamma(y)}{\gamma(x+y)}$ with
$\Gamma(x)=\int_0^\infty{t}^{x-1}e^{-t}\mbox{d}t$ as the Gamma function.

To investigate the impact of limited feedback, we first analyze the received signal power and interference power with limited feedback.

\begin{lemma}\label{Lemma:Signal_LFB}
If CDI at the BS is obtained through limited feedback, the received signal power of the $i$-th user with the expectation on $\theta_{i,i}$ can be approximated as $\mathbb{E}_{\theta_{i,i}}\left[|\mathbf{h}^*_{i,i}\mathbf{f}_{i,s_i}|^2\right]\approx\xi_{i,i}{X}$, where $\xi_{i,i}$ is given in \eqref{eq:xi} and the RV $X$ is distributed as
\begin{equation}
X\sim\left\{\begin{array}{ll}\chi^2_{2N_t}&s_i=BF\\
\chi^2_{2(N_t-m)}&s_i=IC(\mathcal{I}_i),|\mathcal{I}_i|=m.\end{array}\right.
\end{equation}
The interference power of the $i$-th user from the $j$-th BS is distributed as
\begin{equation}
|\mathbf{h}^*_{i,j}\mathbf{f}_{j,s_j}|^2\sim\left\{\begin{array}{ll}\kappa_{i,j}\chi^2_2&s_j=IC(\mathcal{I}_j), i\in\mathcal{I}_j\\
\chi^2_2&\mbox{otherwise},\end{array}\right.
\end{equation}
where $\kappa_{i,j}=2^{-\frac{B_{i,j}}{N_t-1}}$.
\end{lemma}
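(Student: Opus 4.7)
The plan is to apply the standard CDI error decomposition from the limited feedback literature and combine it with the precoder constructions of Section \ref{Sec:Model}. For each pair $(i,j)$, write the true channel direction as
\[
\tilde{\mathbf{h}}_{i,j}=\cos\theta_{i,j}\,\hat{\mathbf{h}}_{i,j}+\sin\theta_{i,j}\,\mathbf{s}_{i,j},
\]
where $\mathbf{s}_{i,j}$ is a unit vector in the orthogonal complement of $\hat{\mathbf{h}}_{i,j}$. Under RVQ with $B_{i,j}$ bits, $\|\mathbf{h}_{i,j}\|^2$, $\sin^2\theta_{i,j}$, and $\mathbf{s}_{i,j}$ are mutually independent; $\mathbf{s}_{i,j}$ is isotropic on the unit sphere of the $(N_t-1)$-dimensional subspace orthogonal to $\hat{\mathbf{h}}_{i,j}$; and classical RVQ quantization bounds give $\mathbb{E}[\sin^2\theta_{i,j}]\approx\kappa_{i,j}=2^{-B_{i,j}/(N_t-1)}$, tight for $B_{i,j}$ not too small \cite{Jin06IT,YueLov07Twc}. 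These facts are the workhorses of the proof.

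For the \emph{signal power}, take $s_i=BF$ first. The precoder is $\mathbf{f}_{i,BF}=\hat{\mathbf{h}}_{i,i}$, so $|\mathbf{h}^*_{i,i}\mathbf{f}_{i,BF}|^2=\|\mathbf{h}_{i,i}\|^2\cos^2\theta_{i,i}$. Averaging over $\theta_{i,i}$ and invoking \eqref{eq:xi} gives $\xi_{i,i}\|\mathbf{h}_{i,i}\|^2$ with $\|\mathbf{h}_{i,i}\|^2\sim\chi^2_{2N_t}$, matching the claim. For $s_i=IC(\mathcal{I}_i)$ with $|\mathcal{I}_i|=m$, the precoder is the normalized projection of $\hat{\mathbf{h}}_{i,i}$ onto the nullspace of $\{\hat{\mathbf{h}}_{k,i}:k\in\mathcal{I}_i\}$. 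Inserting the decomposition into $\mathbf{h}_{i,i}^*\mathbf{f}_{i,s_i}$ and taking the expectation over $\theta_{i,i}$ kills the cross term (after separately averaging over the direction of $\mathbf{s}_{i,i}$) and replaces the dominant $\cos^2\theta_{i,i}$ factor by $\xi_{i,i}$, leaving a residual $\|\mathbf{h}_{i,i}\|^2\|(\mathbf{I}-P_{\hat{\mathbf{H}}})\hat{\mathbf{h}}_{i,i}\|^2$ whose perfect-CSI analog, from \emph{Lemma \ref{Lemma:Signal}}, is $\chi^2_{2(N_t-m)}$; I would adopt the same distribution as the approximation in the LFB case.

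For the \emph{interference power}, split on whether BS $j$ does ICIC for user $i$. If it does not, the precoder $\mathbf{f}_{j,s_j}$ is constructed from channels statistically independent of $\mathbf{h}_{i,j}\sim\mathcal{CN}(\mathbf{0},\mathbf{I})$, so by rotational invariance $|\mathbf{h}^*_{i,j}\mathbf{f}_{j,s_j}|^2\sim\chi^2_2$. If BS $j$ does do ICIC for user $i$, the construction forces $\hat{\mathbf{h}}^*_{i,j}\mathbf{f}_{j,s_j}=0$, so the decomposition collapses to
\[
|\mathbf{h}^*_{i,j}\mathbf{f}_{j,s_j}|^2=\|\mathbf{h}_{i,j}\|^2\sin^2\theta_{i,j}\,|\mathbf{s}^*_{i,j}\mathbf{f}_{j,s_j}|^2.
\]
By the independence of $\mathbf{s}_{i,j}$ from $\mathbf{f}_{j,s_j}$ and $\|\mathbf{h}_{i,j}\|$, together with the isotropy of $\mathbf{s}_{i,j}$, the product $\|\mathbf{h}_{i,j}\|^2|\mathbf{s}^*_{i,j}\mathbf{f}_{j,s_j}|^2$ is distributionally $\chi^2_2$. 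Replacing the random $\sin^2\theta_{i,j}$ by its mean $\kappa_{i,j}$ then yields the stated $\kappa_{i,j}\chi^2_2$ approximation.

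The main obstacle is justifying the distributional approximations rather than just moment matching. The ICIC signal-power step implicitly treats the quantization error direction $\mathbf{s}_{i,i}$ as decoupled from the nullspace projection of $\hat{\mathbf{h}}_{i,i}$, and the interference step replaces a random factor by its mean. Both simplifications are standard in the limited-feedback MIMO literature and become accurate for moderate-to-large $B_{i,j}$ and $N_t$; I would cite the corresponding RVQ analyses and defer empirical validation of the resulting throughput formulas to the numerical section.
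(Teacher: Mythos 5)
Your proposal follows essentially the same route as the paper's own proof: the same $\cos\theta/\sin\theta$ decomposition of the quantized channel direction, dropping the $\sin\theta$ contribution to obtain the $\xi_{i,i}\chi^2_{2(N_t-m)}$ signal approximation, and the quantization-cell-style treatment of the residual interference $\|\mathbf{h}_{i,j}\|^2\sin^2\theta_{i,j}|\mathbf{s}_{i,j}^*\mathbf{f}_{j,s_j}|^2$ as $\kappa_{i,j}\chi^2_2$. The one point worth flagging is that $\|\mathbf{h}_{i,j}\|^2|\mathbf{s}_{i,j}^*\mathbf{f}_{j,s_j}|^2$ is exactly a $\chi^2_{2N_t}\times\mathrm{Beta}(1,N_t-2)$ product rather than exactly $\chi^2_2$ (and $\mathbb{E}[\sin^2\theta_{i,j}]$ carries a compensating $\tfrac{N_t-1}{N_t}$ factor under the quantization cell approximation), so that step is an approximation --- which is precisely how the paper treats it by citing \cite{ZhaRob09EURASIP}.
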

\begin{proof}
See Appendix \ref{Apx:Lemma_SignalLFB}.
\end{proof}
\begin{remark}
From this lemma, we see that limited feedback has differing impact on the received signal term and the interference term: it only changes the mean, not the distribution of the signal term; for the interference term, the distribution is the same without ICIC, but limited feedback causes residual interference with ICIC. In addition, at high edge SNR, the impact of limited feedback on the signal term only causes a constant rate loss of $\log\xi_{i,i}$ for the $i$-th user, but the resulting residual OCI increases with edge SNR and limits the system throughput. Therefore, the CDI need not be of the same accuracy for the home BS and the helper BS, which leaves flexibility for the feedback design.
\end{remark}

Based on the above lemma, we provide the following theorem on the achievable throughput with limited feedback.
\begin{theorem}\label{Thm:2cell_LFB}

The achievable throughput of user 1 in a 3-cell network with given user locations and limited feedback is approximated by
\begin{equation}\label{eq:3cell_LFB}
R_1(\mathbf{s})\approx\left\{\begin{array}{ll}
R_I^{(3)}(\xi_{1,1} P^r_{1,1},P^r_{1,2},P^r_{1,3},M) & s_2=BF, s_3=BF\\
R_I^{(3)}(\xi_{1,1} P^r_{1,1},P^r_{1,j}, \kappa_{1,k}P^r_{1,k}, M) & s_j=IC(\mathcal{I}_j),1\in\mathcal{I}_j, j=2\mbox{ or }3\\
R_I^{(3)}(\xi_{1,1} P^r_{1,1},\kappa_{1,2}P^r_{1,2},\kappa_{1,3}P^r_{1,3}M) & s_j=IC(\mathcal{I}_j),1\in\mathcal{I}_j, j=2,3
\end{array}\right.
\end{equation}
where $R_I^{(3)}$ is given in \eqref{eq:Rate_3cell}, and $M$ is given by \eqref{eq:M}.
\end{theorem}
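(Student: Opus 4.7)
The plan is to start from the SINR expression \eqref{eq:SINR_TTT} for user~1, substitute the approximate signal-power and interference-power distributions from \emph{Lemma \ref{Lemma:Signal_LFB}} into the SINR, and then invoke the closed-form expectation identity of \emph{Lemma \ref{Lemma:3cell}} to evaluate the ergodic rate. The overall template mirrors the proof of \emph{Theorem \ref{Thm:3cell}} in the perfect-CSI case, except that two kinds of quantization-induced scalars now appear: a factor $\xi_{1,1}$ multiplying the useful signal power and a factor $\kappa_{1,j}$ multiplying the residual leakage from each BS that nulls toward user~1 but does so only imperfectly because of finite-rate feedback.

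First I would apply \emph{Lemma \ref{Lemma:Signal_LFB}} to the numerator of \eqref{eq:SINR_TTT}, replacing $|\mathbf{h}_{1,1}^{*}\mathbf{f}_{1,s_1}|^{2}$ by $\xi_{1,1}X$ with $X\sim\chi^2_{2M}$ and $M$ determined by $s_1$ via \eqref{eq:M}. Then, for each interference term in the denominator, the second half of \emph{Lemma \ref{Lemma:Signal_LFB}} tells me the chi-square factor is $\chi^2_2$ if BS~$j$ does not cancel interference toward user~1 (either because $s_j=BF$ or because $1\notin\mathcal{I}_j$), and $\kappa_{1,j}\chi^2_2$ if BS~$j$ does attempt to null user~1. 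This brings the SINR into the canonical form $\alpha Z/(1+\delta_1 Y_1+\delta_2 Y_2)$ of \emph{Lemma \ref{Lemma:3cell}}, with parameters read off by inspection in each case. Independence of $Z,Y_1,Y_2$ holds because they are functions of the mutually independent vectors $\mathbf{h}_{1,1},\mathbf{h}_{1,2},\mathbf{h}_{1,3}$. Substituting into \eqref{eq:Rate_3cell} then yields the three lines of \eqref{eq:3cell_LFB}: $(\alpha,\delta_1,\delta_2)$ ranges over $(\xi_{1,1}P^r_{1,1},P^r_{1,2},P^r_{1,3})$ in case~1, picks up exactly one $\kappa$ factor in case~2, and picks up both $\kappa_{1,2}$ and $\kappa_{1,3}$ factors in case~3.

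I expect the hard part to be conceptual rather than computational. The $\approx$ in \eqref{eq:3cell_LFB} is inherited from \emph{Lemma \ref{Lemma:Signal_LFB}}, where the useful-signal power has been averaged only over the quantization angle $\theta_{1,1}$ and then treated as if it were exactly $\xi_{1,1}$ times the appropriate chi-square, effectively decoupling the angular quantization error from the channel-norm randomness before the outer $\mathbb{E}[\log_2(1+\cdot)]$ is taken. Justifying that this interchange is a good approximation—and carrying the bookkeeping in case~2 so that $\kappa_{1,j}$ rides on the interferer that is actually doing ICIC toward user~1, not on the one that is beamforming selfishly or nulling toward a different user—will be the only genuinely delicate steps. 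Once those are settled, the remainder is a direct parameter substitution into \emph{Lemma \ref{Lemma:3cell}}.
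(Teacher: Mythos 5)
Your proposal is correct and follows essentially the same route as the paper: substitute the limited-feedback signal and interference distributions from \emph{Lemma \ref{Lemma:Signal_LFB}} into the SINR of \eqref{eq:SINR_TTT}, with the approximation arising from pulling $\mathbb{E}_{\theta_{1,1}}$ inside the logarithm onto the signal power, and then read off the parameters of \emph{Lemma \ref{Lemma:3cell}} case by case. You even correctly flag the one delicate bookkeeping point (which interferer carries the $\kappa$ factor), which the paper glosses over.
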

\begin{proof}
See Appendix \ref{Apx:Thm2cell_LFB}.
\end{proof}
\begin{remark}
This result can be easily modified for a 2-cell network. Note that with limited feedback, each user always suffers from OCI, due to co-channel transmission and/or imperfect interference cancellation. Therefore, the transmission strategy selection now depends not only on user locations, average edge SNR, but also on the number of feedback bits.
\end{remark}

In Fig. \ref{fig:SimvsApprox_B10dB10}, we compare the simulation and approximation results with limited feedback in the same setting as Fig. \ref{fig:simvscal_10dB} and with feedback bits for each channel direction to be $B=10$. We see that the approximations are very accurate. Compared to Fig. \ref{fig:simvscal_10dB}, the performance gain due to ICIC is reduced, but the strategy pair $(IC,IC)$ is still preferred. We can also get a similar plot as Fig. \ref{fig:CSIT}, which is omitted due to space limitation, but similar observations can be made except that operating regions with the ICIC strategy shrink.

\subsection{Limited Feedback Design}
With ICIC, each user needs to feed back multiple channel directions. The feedback should be carefully designed as the resource on the feedback channel is limited. In this subsection, we consider feedback in the following two scenarios:
\begin{itemize}
\item If the number of feedback bits can be varying, how many bits do we need to keep a constant rate loss versus the perfect CSI case?
\item If the total number of feedback bits is fixed, how should we allocate them between the CSI feedback for the home BS and the CSI feedback for the helper BS?
\end{itemize}

\subsubsection{Feedback bits for a constant rate loss}
If we can vary the number of feedback bits, based on the rate loss analysis, we provide the following theorem on the required scaling of feedback bits with different system parameters to keep a constant rate loss.
\begin{theorem}\label{Thm:scaleB}
In a 3-cell network, to keep a constant rate loss of $\log_2\delta_R$ bps/Hz compared to perfect CSI, the number of feedback bits for each helper BS needs to satisfy
\begin{equation}\label{eq:scaleB}
B^\star\geq(N_t-1)\log_2\left(\frac{2P_0}{\delta_R-1}\right).
\end{equation}
\end{theorem}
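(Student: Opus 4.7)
The plan is to track the rate loss between the perfect-CSI throughput (Theorem \ref{Thm:3cell}) and the limited-feedback throughput (Theorem \ref{Thm:2cell_LFB}) caused by residual other-cell interference, apply a Jensen upper bound, and invert the resulting inequality for $B$. I would focus on the worst-case scenario that drives the bound: user 1 being helped by both BS 2 and BS 3, i.e.\ the third case of \eqref{eq:3cell_LFB}, since this is where the maximal number of quantized helper channels enters the denominator of the SINR.

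With perfect CSI the received SINR is $a=P^r_{1,1}|\mathbf{h}_{1,1}^{*}\mathbf{f}_{1,s_1}|^{2}$, whereas Lemma \ref{Lemma:Signal_LFB} gives $\mathrm{SINR}_1^{\mathrm{LFB}}\approx \xi_{1,1} a/(1+I)$ with $I=\kappa_{1,2}P^r_{1,2}Y_1+\kappa_{1,3}P^r_{1,3}Y_2$ and $Y_j\sim\chi^2_2$. First I would establish the per-realization inequality
\begin{equation*}
\log_2(1+a)-\log_2\!\Bigl(1+\tfrac{\xi_{1,1}\,a}{1+I}\Bigr)
=\log_2\!\frac{(1+a)(1+I)}{(1+I)+\xi_{1,1}a}
\;\le\; \log_2(1+I)-\log_2\xi_{1,1},
\end{equation*}
where the last step uses $\xi_{1,1}\le 1$ so that $(1+I)+\xi_{1,1}a\ge\xi_{1,1}(1+I+a)$ together with $(1+a)/(1+I+a)\le 1$. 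Taking expectations and then applying Jensen to the concave function $\log_2(1+\cdot)$ yields
\begin{equation*}
\Delta R \;\le\; \log_2\!\bigl(1+\mathbb{E}[I]\bigr)-\log_2\xi_{1,1}
= \log_2\!\bigl(1+\kappa_{1,2}P^r_{1,2}+\kappa_{1,3}P^r_{1,3}\bigr)-\log_2\xi_{1,1},
\end{equation*}
since $\mathbb{E}[Y_j]=1$ for $Y_j\sim\chi^2_2$.

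Next I would upper bound the helper-BS received powers by the edge SNR $P_0$ (the regime of interest, where ICIC actually helps), and set $B_{1,2}=B_{1,3}=B^{\star}$ so that $\kappa_{1,2}=\kappa_{1,3}=\kappa=2^{-B^{\star}/(N_t-1)}$. This gives $\Delta R\le \log_2(1+2\kappa P_0)-\log_2\xi_{1,1}$. The term $-\log_2\xi_{1,1}$ is a bounded constant controlled by the home-BS feedback and is independent of $B^{\star}$, so the scaling requirement is dictated purely by the residual-interference term. Enforcing $\log_2(1+2\kappa P_0)\le \log_2\delta_R$ gives $2\kappa P_0\le\delta_R-1$, i.e.\ $2^{-B^{\star}/(N_t-1)}\le(\delta_R-1)/(2P_0)$, which rearranges to \eqref{eq:scaleB}.

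The main obstacle I expect is cleanly separating the signal-attenuation loss (through $\xi_{1,1}$, which is governed by the \emph{home}-BS feedback) from the residual-interference loss (through $\kappa_{1,j}$, governed by the \emph{helper}-BS feedback), so that the statement of Theorem \ref{Thm:scaleB} can be given purely in terms of $B^{\star}$ for each helper BS. A secondary subtlety is justifying the Jensen step: it loses very little at high SNR because $I$ is a sum of light-tailed chi-square variables with small mean when $\kappa$ is small, but one should verify that the resulting bound is tight enough that the $2P_0$ constant in \eqref{eq:scaleB} is not loose. If needed, a refinement using $\mathbb{E}[\log_2(1+I)]\le \log_2(1+\mathbb{E}[I])$ termwise across the three-cell case would still yield the same asymptotic scaling, confirming that the number of feedback bits must grow linearly in $N_t$ and in $P_0$ expressed in dB.
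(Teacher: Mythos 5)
Your proposal is correct and follows essentially the same route as the paper: bound the rate loss by $\mathbb{E}[\log_2(1+I)]$ where $I$ is the residual interference from the two helper BSs, apply Jensen's inequality to replace $I$ by its mean $2\kappa P_0$ (using $P^r_{1,j}\le P_0$ and Lemma \ref{Lemma:Signal_LFB}), and solve $\log_2(1+2\kappa P_0)=\log_2\delta_R$ for $B^\star$. The only difference is cosmetic: you derive the initial rate-loss bound from a per-realization inequality (and explicitly carry, then discard, the $-\log_2\xi_{1,1}$ signal-attenuation term), whereas the paper imports that bound directly from the limited-feedback literature \cite{Jin06IT,ZhaRob09EURASIP}.
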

\begin{proof}
See Appendix \ref{Apx:ThmScaleB}.
\end{proof}

We see that similar to multiuser MIMO systems \cite{Jin06IT,ZhaRob09EURASIP}, the feedback bit rate needs to increase linearly with both $N_t$ and $P_0$ (in dB). The difference is that for ICIC there is no such requirement on the feedback of the CSI to the home BS, as it provides a fixed rate loss with a fixed number of feedback bits at high SNR according to \emph{Lemma \ref{Lemma:Signal_LFB}}, so we do not to increase the feedback bits for this link.

\subsubsection{Feedback bits allocation}
As shown in \emph{Lemma \ref{Lemma:Signal_LFB}}, the CSI accuracy for the home BS and the helper BS has different impact on the performance. This indicates that with a fixed number of feedback bits it is possible to improve the performance by adaptively allocating the total feedback bits for the home BS and the helper BS rather than an equal allocation.

With feedback bits allocation, the maximum achievable throughput for a given scenario is now given as
\begin{equation}\label{eq:bitAlloc}
R_{\mbox{sum}}=\max_{\mathbf{s}\in\mathcal{S},\sum_jB_{i,j}=B}\sum_{i=1}^3R_i(\mathbf{s},B_{i,1},B_{i,2},B_{i,3}),
\end{equation}
where $\mathbf{s}=(s_1,s_2,s_3)$ and the expression for $R_i$ is given in \eqref{eq:3cell_LFB}. This is a combinatorial optimization problem, but may be solved by an exhaustive search for small $B$. To reduce the search space, we can add additional constraints such as forcing the number of feedback bits to be the same for all the helper BSs.

\section{Numerical Results}\label{Sec:Num}
In this section, we present some numerical results to show the performance of our proposed adaptive ICIC strategy and build intuition. A 3-cell network as shown in Fig. \ref{fig:3cell} is considered, where there is one active user randomly located in each cell in the shadow area. The radius of each cell is $R=1$ km, the path loss exponent is $3.7$, and $N_t=4$.

\subsection{Performance comparison with perfect CSI}
In this part, we compare the performance of three systems with different transmission strategies: the system without ICIC, i.e. each BS does selfish beamforming for its own user, denoted as \emph{no ICIC}, the system where each BS always does ICIC for both of its neighboring cells, denoted as \emph{static ICIC}, and the system with the proposed adaptive ICIC strategy where the transmission strategy at each BS is selected based on \emph{Theorem \ref{Thm:3cell}}, denoted as \emph{adaptive ICIC}.

Fig. \ref{fig:Comp} shows the average throughput and the 5th percentile throughput, representing the cell edge throughput, for each of the three systems. We see that the performance difference depends on the edge SNR $P_0$:

\textbf{For low $P_0$}, the \emph{static ICIC} system provides lower throughput than the other two, which shows there is no need to perform ICIC in this regime.

\textbf{For high $P_0$}, there is a rate ceiling for the \emph{no ICIC} system, as its throughput is limited by OCI. The \emph{static ICIC} and \emph{adaptive ICIC} systems have similar performance, and both provide significant throughput gain over the \emph{no ICIC} system, e.g for $P_0=15$ dB\footnote{As shown in \cite{HuaVal05PIMRC}, an edge SNR$\sim15$ dB can be obtained with reasonable assumptions.}, the average throughput gain is $53\%$ while the edge throughput gain is $210\%$.

\textbf{For medium $P_0$ ($0\sim5$)}, \emph{adaptive ICIC} outperforms both \emph{no ICIC} and \emph{static ICIC}, which shows that we should not simply switch between selfish beamforming and static ICIC but should adaptively and jointly select the transmission technique at each BS.

Compared to \emph{static ICIC}, the \emph{adaptive ICIC} system is able to reduce the amount of required CSI, as the CSI for a neighboring BS is needed only when this BS does ICIC for the user. Fig. \ref{fig:FB} compares the amount of CSI requirement for different systems, in number of channel directions. For medium $P_0$, \emph{adaptive ICIC} reduces the required CSI amount compared to \emph{static ICIC} while provides higher throughput than the other two systems.

\subsection{Impact of limited feedback}
In Fig. \ref{fig:Comp}, we show the system performance when CSI is provided by limited feedback. We assume each user feeds back $B_s$ bits for the CSI to its home BS and $B_I$ bits for the CSI to each of its helper BSs if required. Adaptive ICIC based on \eqref{eq:3cell_LFB} is applied.

First, we scale $B_I$ according to \eqref{eq:scaleB} with the rate loss target $\delta_R=1$ bps/Hz, i.e. $B_I=B^\star$, while $B_s$ has the same scaling ($B_s=B^\star$) or is fixed to be $B_s=6$. It shows that by scaling $B_I$, we can obtain a constant rate loss (less than $1$ bps/Hz for both average and edge throughput) to the system with perfect CSI. In addition, we do not need to scale $B_s$, as a fixed $B_s$ only causes a fixed rate loss at high edge SNR. Note that at high $P_0$, a large value of $B_I$ is required, e.g. $B_I=18$ for $P_0=15$.

Next, we assume the total number of feedback bits is fixed to be $30$ bits, i.e. $B_s+2B_I=30$, and compare adaptive bit allocation according to \eqref{eq:bitAlloc} with uniform allocation ($B_s=B_I=10$). For adaptive bit allocation, each user and BS need to maintain multiple codebooks, and to reduce the complexity we limit the allocation bit pair as $(B_s,B_I)\in\{(10,10),(8,11),(6,12),(4,13),(2,14)\}$. We see that adaptive bit allocation provides better performance at high $P_0$, but the throughput gain over uniform allocation is marginal ($\sim6\%$ for average throughput and $\sim11\%$ for edge throughput). This is because the total number of feedback bits is not large enough. For example, to keep a rate loss of $1$ bps/Hz to perfect CSI case, we need $B_I=18$ for $P_0=15$, so $B_s+2B_I>36$, which can not be satisfied with the available number of bits. However, we see that all the adaptive ICIC systems provide a significant gain for both average and edge throughput over the system without ICIC even with limited feedback.

\section{Conclusions}\label{Sec:Con}
In this paper, we investigated spatial ICIC to suppress OCI in a multicell wireless network. An adaptive strategy was proposed, where multiple BSs jointly select transmission strategies. ICIC is a type of coordinated single-cell transmission, so the system complexity is low and no central processing unit is required. In addition, it has a low overhead as only user locations are required for strategy selection, and instantaneous CSI is needed only at the home BS and the neighboring BS that does ICIC for the user. Numerical results showed that ICIC provides both average and edge throughput gain, and at medium edge SNR the adaptive strategy outperforms both the system without ICIC and the one with static ICIC. Even when the CSI is provided by limited feedback, the ICIC system still provides significant throughput gain with carefully designed feedback strategies. Given the consistent results for two and three cells, we conjecture that the results and design intuition extend to large cellular networks.

\useRomanappendicesfalse
\appendix
\subsection{Proof of Lemma \ref{Lemma:2cell}}\label{Apx:Lemma2cell}
Due to space limitation, we only provide some key steps. First, the cumulative distribution function (cdf) of the
RV $X$ can be derived as
\begin{align}
F_X(x)=1-\sum_{i=0}^{M-1}\sum_{l=0}^i\frac{\gamma_1^{l+1-i}}{\gamma_2(i-l)!}\cdot\frac{x^ie^{-x/\gamma_1}}{\left(x+\frac{\gamma_1}{\gamma_2}\right)^{l+1}}.
\end{align}
The expectation of $\ln(1+x)$ on $X$ is then derived as follows.
\begin{align}
&\mathbb{E}_X\left[\ln(1+X)\right]=\int_0^\infty\ln(1+x)\mbox{d}F_X
\stackrel{(a)}{=}\int_0^\infty\frac{1-F_X(x)}{x+1}\mbox{d}x\notag\\
=&\sum_{i=0}^{M-1}\sum_{l=0}^i\frac{\gamma_1^{l+1-i}}{\gamma_2(i-l)!}\int_0^\infty\frac{x^ie^{-x/\gamma_1}}{(x+1)\left(x+\frac{\gamma_1}{\gamma_2}\right)^{l+1}}\mbox{d}x
=\sum_{i=0}^{M-1}\sum_{l=0}^i\frac{\gamma_1^{l+1-i}}{\gamma_2(i-l)!}\cdot{I_1}\left(\frac{1}{\gamma_1},\frac{\gamma_1}{\gamma_2},i,l+1\right),
\end{align}
where step (a) follows integration by parts and
$I_1(\cdot,\cdot,\cdot,\cdot)$ is the integral
\begin{equation}\label{eq:I1}
I_1(a,b,m,n)=\int_0^\infty\frac{x^me^{-ax}}{(x+b)^n(x+1)}\mbox{d}x.
\end{equation}
Then we get \eqref{eq:Rate_I2}. A closed-form expression for the integral $I_1$ is given as follows:
\begin{equation}\label{eq:I1expression}
{I}_1(a,b,m,n)=\sum_{i=1}^n\frac{(-1)^{i-1}}{(1-b)^i}\cdot{{I}_2}\left(a,b,m,n-i+1\right)+\frac{{{I}_2}\left(a,1,m,1\right)}{(b-1)^n},
\end{equation}
where $I_2$ is given as
\begin{equation}
{I}_2(a,b,m,n)=\int_0^\infty\frac{x^me^{-ax}}{(x+b)^n}\mbox{d}x
=e^{ab}\sum_{i=0}^m{m\choose{i}}(-b)^{m-i}{I}_3(a,b,i-n),
\end{equation}
and
\begin{equation}
{I}_3(a,b,m)=\int_b^\infty{x^me^{-ax}}\mbox{d}x
=\left\{\begin{array}{lc}e^{-ab}\sum_{i=0}^m\frac{m!}{i!}\frac{b^i}{a^{m-i+1}}&m\geq{0}\\
E_1(ab) & m=-1\\
\frac{E_1(ab)}{(-a)^{-m-1}(-m-1)!}+\frac{e^{-ab}}{b^{-m-1}}\sum_{i=0}^{-m-2}\frac{(-ab)^i(-m-i-2)!}{(-m-1)!} & m\leq{-2}\end{array},\right.
\end{equation}
where $E_1(x)$ is the exponential-integral function of the first
order.

\subsection{Proof of Lemma \ref{Lemma:Signal_LFB}}\label{Apx:Lemma_SignalLFB}
For the signal term, if $s_i=BF$, the beamforming vector is
now based on the quantized channel direction, i.e. $\mathbf{f}_{i,BF}=\hat{\mathbf{h}}_{i,i}$. The signal term is
\begin{equation}
|\mathbf{h}_{i,i}^*\mathbf{f}_{i,BF}|^2=|\mathbf{h}_{i,i}^*\hat{\mathbf{h}}_{i,i}|^2=\cos^2\theta_{i,i}\|\mathbf{h}_{i,i}\|^2.
\end{equation}
Taking the expectation on $\theta_{i,i}$, we have
$\mathbb{E}_{\theta_{i,i}}\left[|\mathbf{h}_{i,1}^*\mathbf{f}_{i,BF}|^2\right]\sim\xi_{i,i}\chi^2_{2N_t}$.

If $s_i=IC(\mathcal{I}_i)$, $|\mathcal{I}_i|=m$, writing
$\tilde{\mathbf{h}}_{i,i}=(\cos\theta_{i,i})\hat{\mathbf{h}}_{i,i}+(\sin\theta_{i,i})\mathbf{g}_{i,i}$,
where $\mathbf{g}_{i,i}$ is orthogonal to
$\hat{\mathbf{h}}_{i,i}$. The signal term with an expectation on
$\theta_{i,i}$ is
\begin{align}\label{eq:Signal_IC}
\mathbb{E}_{\theta_{i,i}}\left[|\mathbf{h}_{i,i}^*\mathbf{f}_{i,IC}|^2\right]&=\|\mathbf{h}_{i,i}\|^2\cdot\mathbb{E}_{\theta_{i,i}}|(\cos\theta_{i,i})\hat{\mathbf{h}}_{i,i}^*\mathbf{f}_{i,IC}+(\sin\theta_{i,i})\mathbf{g}_{i,i}^*\mathbf{f}_{i,IC}|^2\notag\\
&\stackrel{(a)}{\approx}\|\mathbf{h}_{i,i}\|^2\cdot\mathbb{E}_{\theta_{i,i}}|(\cos\theta_{i,i})\hat{\mathbf{h}}_{i,i}^*\mathbf{f}_{i,IC}|^2\notag\\
&=\mathbb{E}_{\theta_{i,i}}\left[\cos^2\theta_{i,i}\right]\cdot\|\mathbf{h}_{i,i}\|^2|\hat{\mathbf{h}}_{i,i}^*\mathbf{f}_{i,IC}|^2\notag\\
&\stackrel{(b)}{\sim}\xi_{i,i}\chi^2_{2(N_t-m)}.
\end{align}
In step (a), we remove the $\sin\theta_{i,i}$ term, which is
normally very small. The
beamforming vector $\mathbf{f}_{i,IC}$ is in the direction of the
projection of vector $\hat{\mathbf{h}}_{i,i}$ on the nullspace of
$\hat{\mathbf{h}}_{j,i}$, $\forall{j}\neq{i}$, so similar to the perfect CSI case we
have
$\|\mathbf{h}_{i,i}\|^2|\hat{\mathbf{h}}_{i,i}^*\mathbf{f}_{i,IC}|^2\sim\chi^2_{2(N_t-m)}$,
which gives step (b).

For the interference power, take user 1 for an example, and consider the interference from BS 2. If BS 2 does not apply ICIC for user 1, then as in \emph{Lemma \ref{Lemma:Signal}}, $|\mathbf{h}^*_{1,2}\mathbf{f}_{2,IC}|^2\sim\chi_2^2$. If BS 2 uses ICIC for user 1, with quantization error, there will be residual interference from BS 2. The interference power is
$|\mathbf{h}^*_{1,2}\mathbf{f}_{2,IC}|^2$, where
$\mathbf{f}_{2,IC}$ is in the direction of the projection of vector
$\hat{\mathbf{h}}_{2,2}$ on the nullspace of
$\hat{\mathbf{h}}_{1,2}$. Based on the quantization cell
approximation, this interference term can be approximated as an
exponential RV with mean $\kappa_{i,j}=2^{-\frac{B_{i,j}}{N_t-1}}$
\cite{ZhaRob09EURASIP}, i.e. $|\mathbf{h}^*_{1,2}\mathbf{f}_{2,IC}|^2\sim\kappa_{i,j}\chi^2_2$.

\subsection{Proof of Theorem \ref{Thm:2cell_LFB}}\label{Apx:Thm2cell_LFB}
In a 3-cell network, the achievable rate for user 1 is first approximated as
\begin{align}
R_1(\mathbf{s})&=\mathbb{E}_{\mathbf{h},\theta_{1,1}}\left[\log_2\left(1+\frac{P^r_{1,1}|\mathbf{h}_{1,1}^*\mathbf{f}_{1,s_1}|^2}{1+P^r_{1,2}|\mathbf{h}_{1,2}^*\mathbf{f}_{2,s_2}|^2+P^r_{1,3}|\mathbf{h}_{1,3}^*\mathbf{f}_{3,s_3}|^2}\right)\right]\notag\\
&\approx\mathbb{E}_\mathbf{h}\left[\log_2\left(1+\frac{P^r_{1,1}\mathbb{E}_{\theta_{1,1}}\left[|\mathbf{h}_{1,1}^*\mathbf{f}_{1,s_1}|^2\right]}{1+P^r_{1,2}|\mathbf{h}_{1,2}^*\mathbf{f}_{2,s_2}|^2+P^r_{1,3}|\mathbf{h}_{1,3}^*\mathbf{f}_{3,s_3}|^2}\right)\right].
\end{align}
Then based on \emph{Lemma \ref{Lemma:3cell}} and \emph{Lemma \ref{Lemma:Signal_LFB}}, we get the results in \eqref{eq:3cell_LFB}.

\subsection{Proof of Theorem \ref{Thm:scaleB}}\label{Apx:ThmScaleB}
At high edge SNR, each BS is doing ICIC for both its neighboring cells, and the system throughput is limited by the residual OCI. As the two neighboring cells are symmetric, let the number of feedback bits be $B_I$ for each of them. As shown in \cite{Jin06IT,ZhaRob09EURASIP}, the rate loss due to imperfect CSI is upper bounded as
\begin{align}
\Delta{R}&\leq\mathbb{E}\left[\log_2\left(1+P^r_{1,2}|\mathbf{h}_{1,2}^*\mathbf{f}_{2,IC}|^2+P^r_{1,3}|\mathbf{h}_{1,3}^*\mathbf{f}_{3,IC}|^2\right)\right]\notag\\
&\stackrel{(a)}{\leq}\log_2\left(1+P^r_{1,2}\mathbb{E}\left[|\mathbf{h}_{1,2}^*\mathbf{f}_{2,IC}|^2\right]+P^r_{1,3}\mathbb{E}\left[|\mathbf{h}_{1,3}^*\mathbf{f}_{3,IC}|^2\right]\right)\notag\\
&\stackrel{(b)}{=}\log_2\left(1+P^r_{1,2}2^{-\frac{B_I}{N_t-1}}+P^r_{1,3}2^{-\frac{B_I}{N_t-1}}\right)\notag\\
&\stackrel{(c)}{\leq}\log_2\left(1+2P_0\cdot2^{-\frac{B_I}{N_t-1}}\right),
\end{align}
where step (a) follows Jensen's inequality, step (b) is from \emph{Lemma \ref{Lemma:Signal_LFB}}, and step (c) is due to the fact $P^r_{1,2}\leq{P_0}$ and $P^r_{1,3}\leq{P_0}$. Then by solving
\begin{equation}
\log_2\left(1+2P_0\cdot2^{-\frac{B_I}{N_t-1}}\right)=\log_2\delta_R
\end{equation}
we get the result in \eqref{eq:scaleB}.

\bibliographystyle{IEEEtran}
\bibliography{bibi}

\begin{figure}
\centering
\includegraphics[clip=true,scale=1]{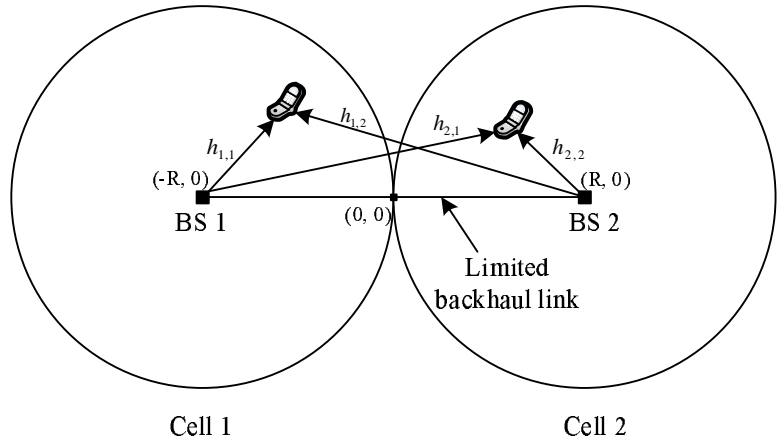}
\caption{A two-cell network. Each BS is serving a home user, which is suffering OCI from the neighboring BS.}\label{fig:2cell}
\end{figure}

\begin{figure}
\centering
\includegraphics[width=4in]{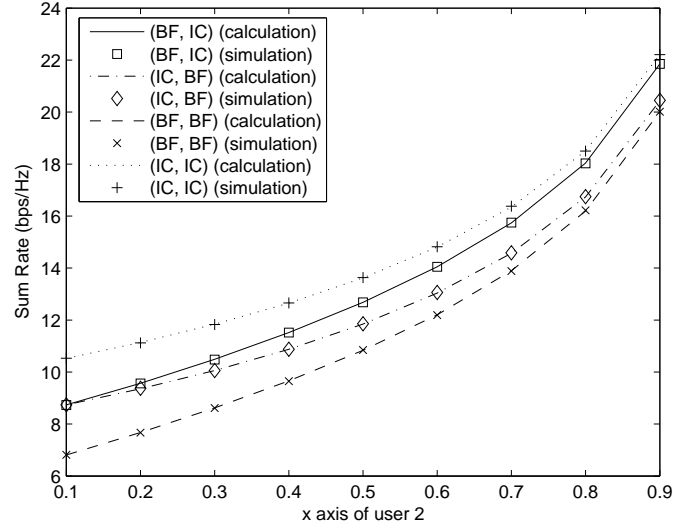}
\caption{Simulation and calculation results for different transmission strategy pairs. User 1 is at the cell edge $(-0.1R,0)$, and user  2 is moving from the cell edge to cell interior, $P_0=10$ dB, $\alpha=3.7$,
$N_t=4$.}\label{fig:simvscal_10dB}
\end{figure}

\begin{figure*}
\centering \subfigure[$P_0=-5$
dB]{\includegraphics[scale=.35]{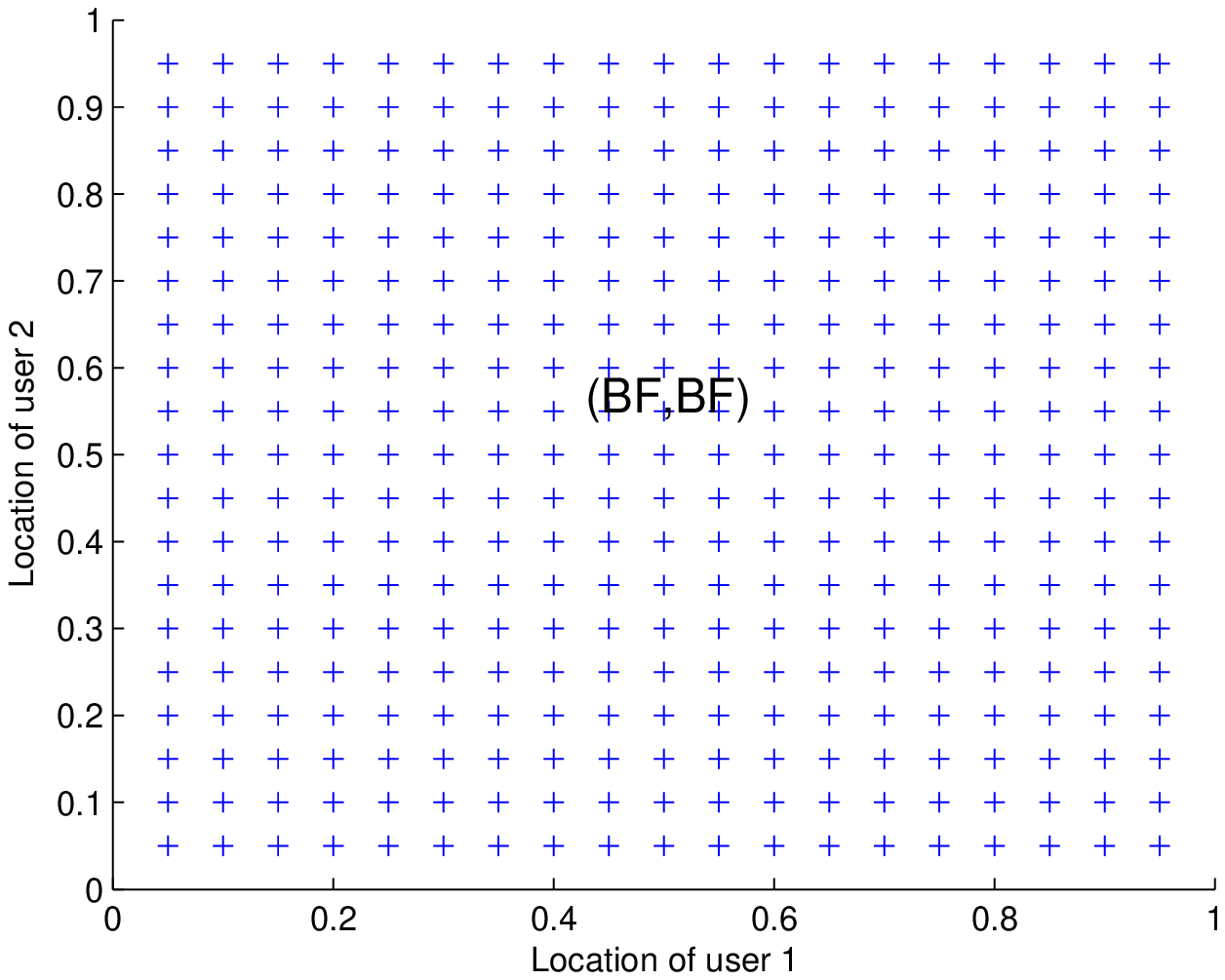}
\label{fig:Nt4_m5dB}}\hfill \subfigure[$P_0=5$
dB]{\includegraphics[scale=.35]{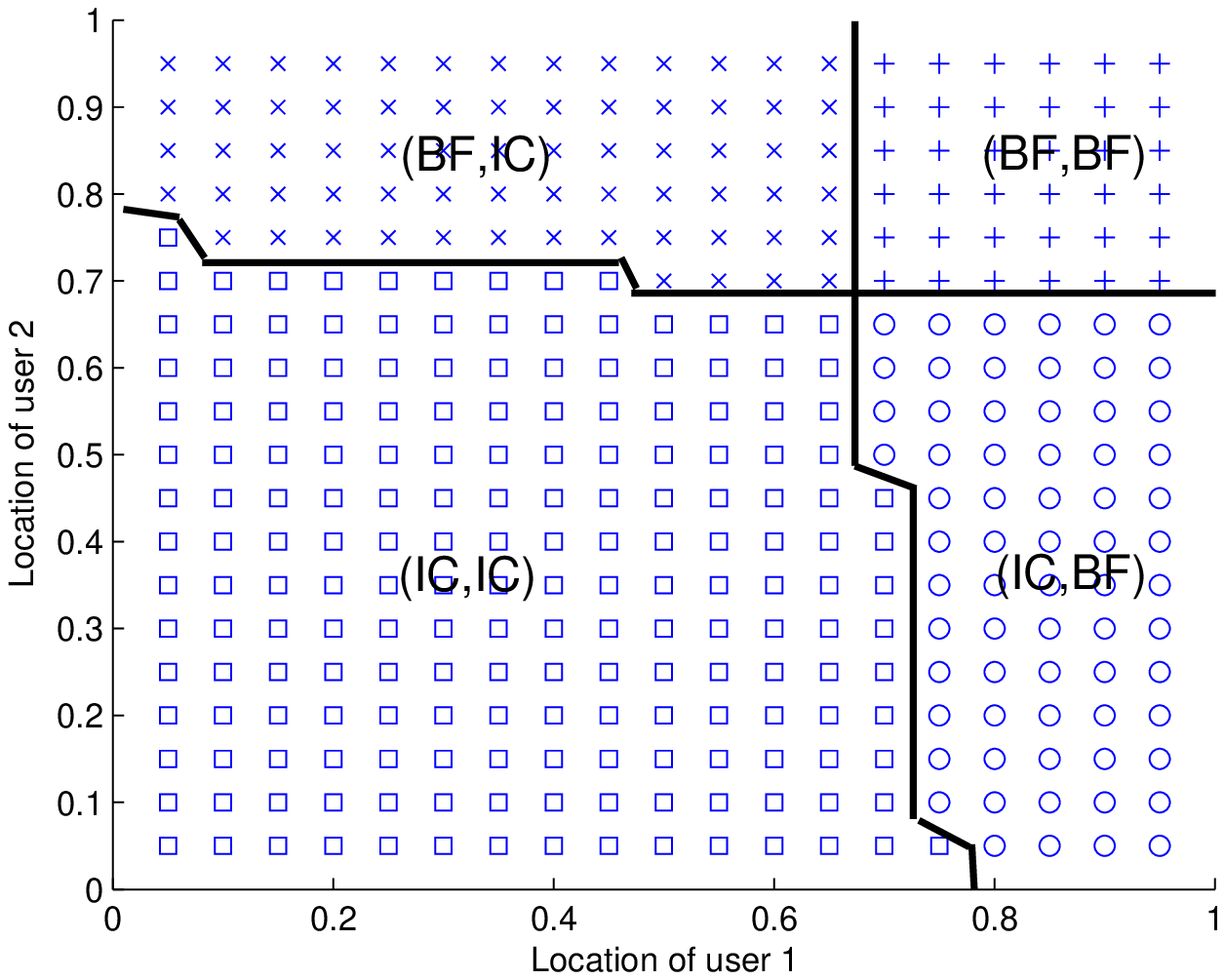} \label{fig:Nt4_5dB}}\hfill \subfigure[$P_0=10$
dB]{\includegraphics[scale=.35]{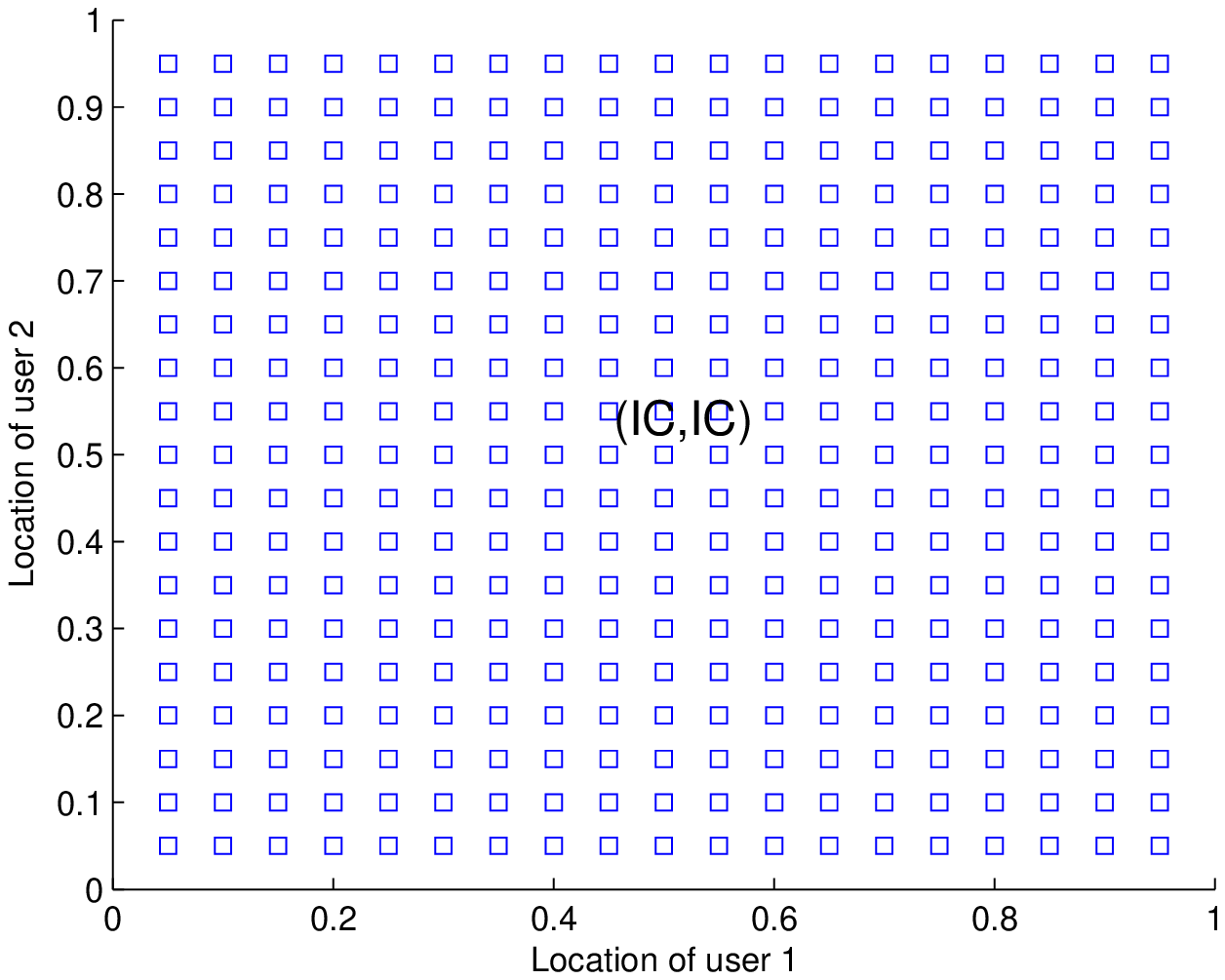}
\label{fig:Nt4_10dB}} \caption{Selected transmission strategies for different user locations in a 2-cell network, where $\alpha=3.7$, $N_t=4$, user 1 and user 2 are on the line connecting BS 1 and BS 2. The mark `x' denotes $(s_1,s_2)=(BF,IC)$, 'o' denotes $(s_1,s_2)=(IC,BF)$, '+' denotes $(s_1,s_2)=(BF,BF)$, and '$\Box$' denotes $(s_1,s_2)=(IC,IC)$.}
\label{fig:CSIT}
\end{figure*}

\begin{figure}
\centering
\includegraphics[width=4in]{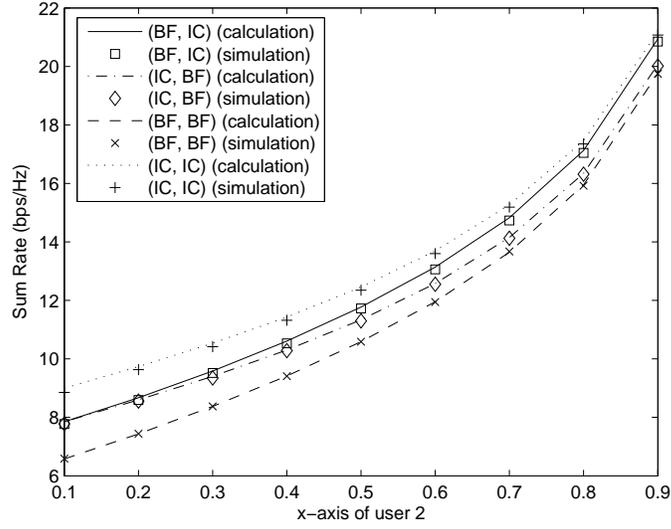}
\caption{Simulations and approximations for the sum throughput in a 2-cell network. User 2 is at $(-.1R,0)$, $P_0=10$ dB, $\alpha=3.7$, $B=10$, and $N_t=4$.}\label{fig:SimvsApprox_B10dB10}
\end{figure}

\begin{figure}
\centering
\includegraphics[clip=true,scale=.7]{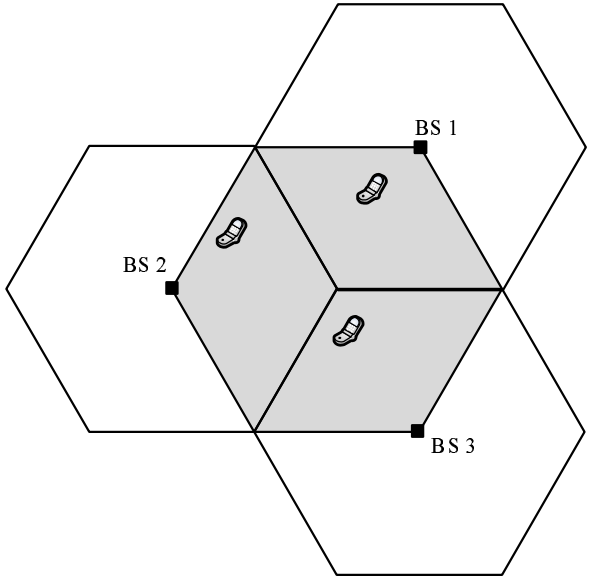}
\caption{A 3-cell network. The shadow area is considered as the ``inner area'', where users suffer high OCI from neighboring cells.}\label{fig:3cell}
\end{figure}

\begin{figure*}
\centering {\subfigure[Average
throughput per cell]{\includegraphics[width=3.1in]{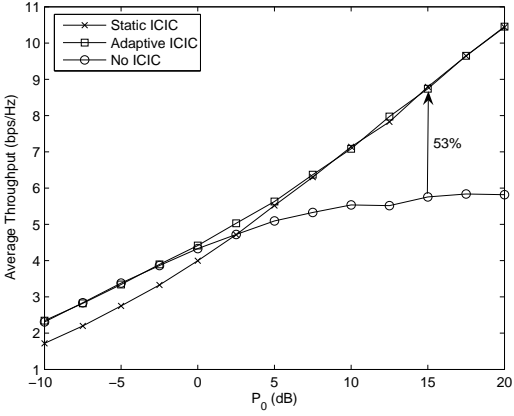} \label{fig:Ravg_CSIT}}
\hfil \subfigure[5th percentile
throughput]{\includegraphics[width=3.1in]{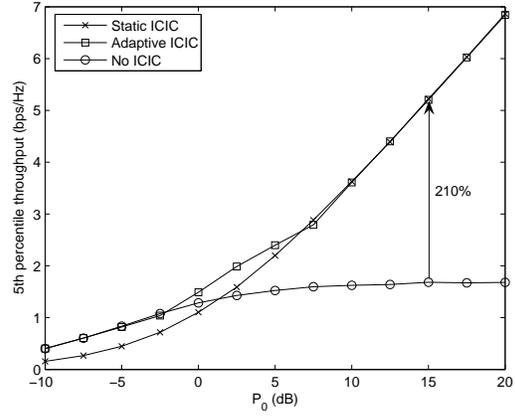}
\label{fig:R5_CSIT}}} \caption{Comparison of systems with different transmission
strategies in a 3-cell network with perfect CSI.} \label{fig:Comp}
\end{figure*}

\begin{figure}
\centering
\includegraphics[width=4in]{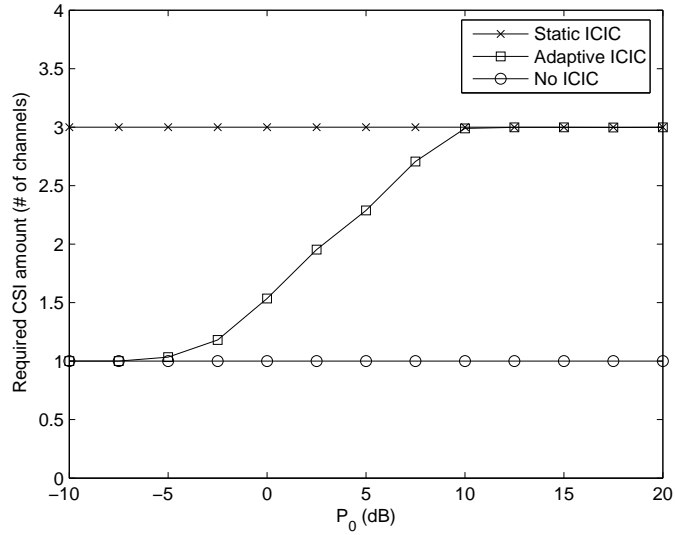}
\caption{The CSI requirements for different systems, which are expressed in number of channel directions.}\label{fig:FB}
\end{figure}

\begin{figure*}
\centering {\subfigure[Average
throughput per cell]{\includegraphics[width=4in]{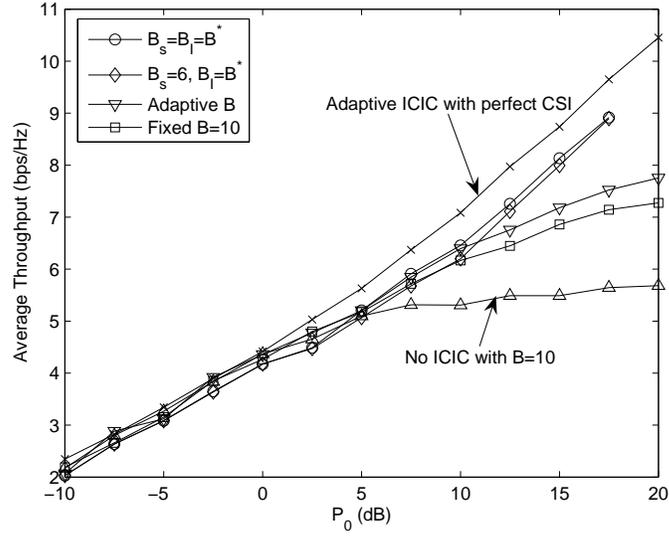} \label{fig:Ravg_LFB}}
\hfil \subfigure[5th percentile
throughput]{\includegraphics[width=4in]{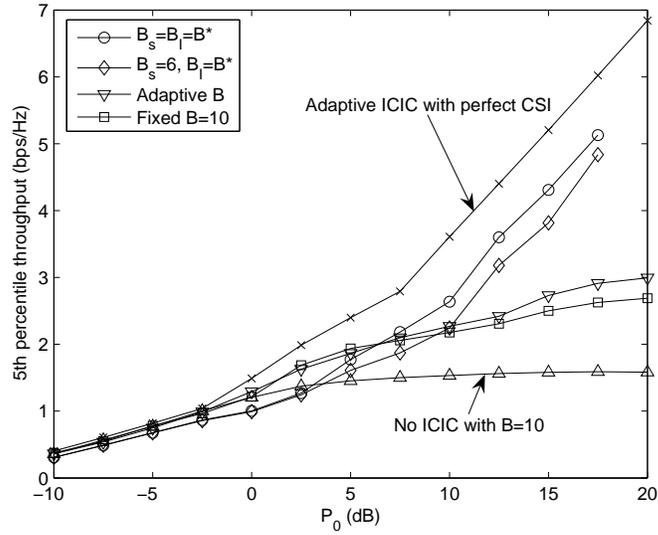}
\label{fig:R5_LFB}}} \caption{Comparison of systems with different
feedback strategies in a 3-cell network, where $B_s$ and $B_I$ are the numbers of feedback bits for the home BS and the helper BS, respectively, and $B^\star$ is given in \eqref{eq:scaleB} for different $P_0$.} \label{fig:Comp}
\end{figure*}

\end{document}